\newtheorem{definition}{Definition}
\newtheorem{lemma}{Lemma}
\newtheorem{theorem}{Theorem}
\newtheorem{corollary}{Corollary}
\newtheorem{fact}{Fact}
\newtheorem{assumption}{Assumption}
\begin{document}

\title{Performance Analysis of RSU-based Multihomed Multilane Vehicular Networks}

\author{Saadallah~Kassir$^*$, Pablo~Caballero$^*$, Gustavo~de~Veciana$^*$, Nannan~Wang$^\dagger$, Xi~Wang$^\dagger$, Paparao~Palacharla$^\dagger$ \\ $^*$The University of Texas at Austin, Electrical and Computer Engineering Department \\ $^\dagger$Fujitsu Laboratories of America, Richardson, TX}

\maketitle

\begin{abstract}
Motivated by the potentially high downlink traffic demands of commuters in future autonomous vehicles, we study a network architecture where vehicles use Vehicle-to-Vehicle (V2V) links to form relay network clusters, which in turn use Vehicle-to-Infrastructure (V2I) links to connect to one or more Road Side Units (RSUs). Such cluster-based multihoming offers improved performance, e.g., in coverage and per user shared rate, but depends on the penetration of V2V-V2I capable vehicles and possible blockage, by legacy vehicles, of line of sight based V2V links, such as those based on millimeter-wave and visible light technologies. This paper provides a performance analysis of a typical vehicle's connectivity and throughput on a highway in the free-flow regime, exploring its dependence on vehicle density, sensitivity to blockages, number of lanes and heterogeneity across lanes. The results show that, even with moderate vehicle densities and penetration of V2V-V2I capable vehicles, such architectures can achieve substantial improvements in connectivity and reduction in per-user rate variability as compared to V2I based networks. The typical vehicle's performance is also shown to improve considerably in the multilane highway setting as compared to a single lane road. This paper also sheds light on how the network performance is affected when vehicles can control their relative positions, by characterizing the connectivity-throughput tradeoff faced by the clusters of vehicles.   \end{abstract}

\begin{IEEEkeywords}
Vehicular Networks, Vehicle-to-Vehicle, Vehicle-to-Infrastructure, Clustering, Multihoming, Multilane model
\end{IEEEkeywords}

\IEEEpeerreviewmaketitle

\section{Introduction}
\label{sec:introduction}

The automotive industry is undergoing several disruptive changes that are likely
to have a significant impact on future wireless networks. These include:
(1) the emergence and increased use of ride sharing fleets; 
(2) the expected development and adoption of driverless car technologies which
may require robust V2V and V2I connectivity; and (3) the large volumes of wireless traffic 
generated by commuters free to work/play while on the road.
This paper embraces these changes by focusing on 
opportunities to leverage V2V-V2I capable fleets to deliver improved
connectivity to vehicles and offload 
traffic from the traditional cellular infrastructure.  
In particular we consider a network architecture wherein V2V-V2I capable vehicles
form relay network clusters which in turn use V2I links
to connect to possibly several Road Side Units (RSUs), leveraging multihomed connectivity.
The central goal of this paper 
is to model and study the performance and tradeoffs afforded by  
such vehicular wireless network architectures and their ability 
to address the potentially substantial traffic demands placed by 
future intelligent transportation network and future commuters in driverless vehicles.   
 
{\bf {\em Related work.} }
There has recently been substantial interest in enabling V2V connectivity driven in part by
the desire to improve safety, collaborative sensing and driving \cite{ZZC15}. 
Current Dedicated Short-Range Communications (DSRC) standards for V2V relaying 
are mature \cite{Ken11,LiW07,BhK14}, but in general fall short
at high vehicle densities or in highly dynamic environments
\cite{CCC14,CVG16, CHS16}. 
DSRC also supports V2I connectivity but only to nearby Road Side Units (RSUs) whence
their placement is critical \cite{NiH10,LMN13,YMY10}.
New alternatives based on millimeter-wave (mmWave) and Visible Light Communication (VLC) physical 
layers  that can deliver higher capacity, e.g., 1-10 Gbps, 
are being currently explored \cite{CCC14, CVG16, CaD17}. 
While these provide substantial improvements in capacity, they 
typically require Line of Sight (LoS) based connectivity. 
The network architecture studied in this paper also provides a partial solution to overcome LoS blockages through the diversity provided by multihomed multilane V2V-based vehicle clusters, making the network more robust to V2V and V2I blocking. 

This paper targets a deeper performance study of a network architecture leveraging RSUs and V2V clustering.
In the past, several works have analyzed such networks, characterizing the user association expected delays \cite{AbZ11,RSN14}, throughput \cite{AKA17, CML17}, connectivity \cite{SLZ14, ZCG16, PaW17, KKS16}, re-healing connection time \cite{WBM07, SoT11} and percolation (full-connectivity) probability \cite{KKS16}, among others. 

Our work builds up on models and results presented in these studies, but includes several novel aspects that were not tackled in the mentioned papers.

\begin{figure*}[t!]
\centering
\includegraphics[width=0.97\textwidth]{./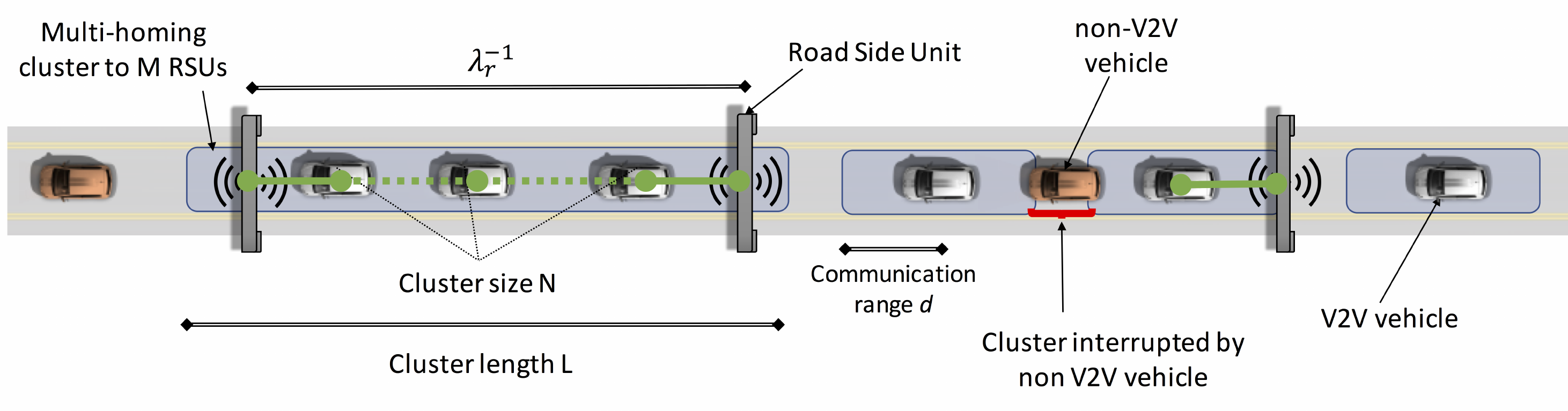}
\vspace{-0.2cm}
\caption{Example of the single lane highway modeled.}
\label{fig:example_highway}
\vspace{-0.2cm}
\end{figure*}

{\bf {\em Contributions.} }
This paper examines a model capturing the salient features of a vehicular-based wireless network, and expands previous work along several key directions. Our primary goal is to characterize the
ability of such networks to deliver high capacity data rates to vehicles reliably.  

First, we consider the role of V2V cluster RSU {\em multihoming}, i.e., the potential
benefits of enabling V2V clusters to connect to multiple RSUs at the same time, in terms of improved connectivity and reliability, as well as reduced variability in users' shared rate. 

Second, we provide an analytical framework to evaluate the network performance
which not only accounts for the role of multihoming, but also captures the impact
of V2V blockages and {\em market penetration} of V2V and V2I capable vehicles. We evaluate the sensitivity of a typical vehicle performance to market penetration.

Third, our evaluation of such vehicle-based networks suggests that even with a moderate penetration of V2V-V2I capable vehicles one can achieve improved connectivity and stability in  per-user shared rate. For instance, \textit{users see a reduced variability in their rate} and users in large stable clusters remain connected for large periods of time. 
Indeed, at high vehicle densities, one can expect almost deterministic user perceived performance. 
Comparisons with simple V2I networks which do not leverage V2V relaying are used to quantify the gains of the cluster-based architecture. 

Fourth, we propose a novel framework to study a typical vehicle's performance on multilane highway systems. This analysis provides key insights regarding the generalization of the single-lane results derived in the paper, as well as the performance of various traffic patterns such as vehicle intensity heterogeneity across the highway lanes, or lanes restricted to V2V-capable vehicles. 


Fifth, we validate the model's underlying assumptions and analyze the multilane highway performance based on additional system level simulation results of realistic traffic flows on roads, and revisit the assumptions to understand how idealized control of the vehicle distribution could lead to improved performance. The analysis of this best case scenario naturally leads to the introduction of a throughput-connectivity tradeoff.

Overall, these results show that such a network could provide a reliable means to offload substantial traffic from the cellular infrastructure to vehicles, particularly when the vehicle density is high, i.e., when such assistance is most needed.

{\bf {\em Paper organization.} }
The paper is organized as follows. 
Section \ref{sec:model} presents a single lane model for a V2V+V2I based wireless network architecture. 
Section \ref{sec:cluster} develops an analytical  characterization of the statistics of typical V2V clusters, e.g., the distributions of the number of vehicles, length and number of connected RSUs as a function of system parameters including the penetration of V2V-V2I capable vehicles. 
Section \ref{sec:perf_analysis}, provides a performance analysis of the coverage probability, shared rate 
and service redundancy as seen by a typical vehicle, and comparisons with those achieved by a V2I network. Section \ref{sec:multilane} provides an extension to multilane highways to assess the impact of non-homogeneities in lane traffic. 
We evaluate the performance of such systems in Section \ref{sec:multilane_perf_eval} and discuss the results sensitivity to the vehicle distribution assumption in Section~\ref{sec:poisson}. Finally, we present conclusions in Section \ref{sec:conclusion}.

\section{V2V+V2I Network Model}\label{sec:model}
We first consider a model for an infinite straight \textit{single lane} road as in \cite{KKS16}.
The model corresponds to a snapshot of a collection of vehicles along the road, whose locations follow a Poisson Point Process (PPP) $\Phi_v$ with intensity $\lambda_v$ (vehicles/meter). The validity of the Poisson model has been discussed in empirical studies such as \cite{WBM07,GSH11} showing that such a model remains appropriate in settings under the so-called free-flow traffic conditions. We model the market penetration of V2V+V2I enabled vehicles on the road as a randomly chosen fraction $\gamma$ of vehicles. Thus a fraction $(1-\gamma)$ are legacy vehicles without communication capabilities which may block LoS communications among V2V capable vehicles. Furthermore, it follows that the locations of V2V capable vehicles follow a PPP with intensity $\gamma \lambda_v,$ and those of legacy vehicles a PPP with rate $(1-\gamma) \lambda_v$. We will use the term \textit{full market penetration} to denote $\gamma=1$.

Finally, RSUs are equally spaced each $\lambda_r^{-1}$ meters along the road. RSUs are wired to the Internet infrastructure to provide mapping data, infotainment and cloud computing services and may also relay messages to other clusters/vehicles. A depiction of the geometry of the network is displayed in Figure \ref{fig:example_highway}.

\textbf{\textit{Connectivity:}} We model the vehicle connectivity based on the three assumptions listed below. 

\begin{assumption}
We assume a unit disk connection model for V2V and V2I links.
\label{assum:1}
\end{assumption}

More specifically, a link is established if the destination vehicle is within a communication range of radius $d$ meters of the transmitter (vehicle or RSU), as in \cite{RSN14, MaA09}, and the LoS between their antennas is not obstructed, e.g., by another vehicle. We assume that the LoS between RSUs and cluster-head vehicles is never obstructed, e.g., by having RSUs above the road as illustrated in Figure~\ref{fig:example_highway}. 

\begin{assumption}
We assume that $d < \lambda_r^{-1}/2$. 
\label{assum:2}
\end{assumption}

Indeed, the communication range $d$ would typically be on the order of 10-200 meters, while RSUs might be deployed at a distance $\lambda_r^{-1}$ on the order of a few kilometers apart.

\begin{assumption}
We assume V2V links to have very high capacity, exceeding the maximum RSU capacity $\rho^{\text{RSU}}$. 
\label{assum:3}
\end{assumption}

Those links can be for instance based on mmWave or VLC technologies \cite{UGB15, CCC14, CVG16}, while the V2I links have maximum capacity of $\rho^{\text{RSU}}$. Thus, for simplicity, V2V links are not a bottleneck in this system.
One possible scenario that can be envisaged is using VLC technology, known for its considerable bandwidth \cite{CaD17} for LoS V2V links, while cellular (potentially mmWave based) links are used for V2I. Another scenario would be that both V2V and V2I links run on the same technology, but on orthogonal channels, and where the V2V channels can be designed to be of larger bandwidth than the V2I ones.
Some other multi-RAT network design considerations to avoid throughput bottlenecks are presented in \cite{KVB16}.






The above assumptions capture the salient features of V2V+V2I networks, allowing us to explore their fundamental characteristics of possible deployments.

In this work, we focus on analyzing the downlink performance of this network architecture.

\textbf{\textit{Sharing / Scheduling:}}
V2V capable vehicles within communication range can form V2V relaying clusters.
In this paper, we assume \textit{RSU multihoming}, i.e., a cluster can connect to multiple RSUs in its range, as illustrated in Figure \ref{fig:example_highway}. This enables the vehicles to see ($i$) improved performance, i.e., connectivity and reduced variability, by sharing the capacity of multiple RSUs and ($ii$) improved reliability through infrastructure redundancy in the case of link failures. In this work, for simplicity and given Assumption~\ref{assum:3}, we will use a max-min fair resource allocation among the vehicles and clusters; where a resource allocation is said to be \textit{max-min fair} if it is only possible to increase the resources assigned to a vehicle by decreasing the rates of vehicles which have lower rates \cite{BGH87}. We study the downlink shared rate seen by vehicles assuming the V2V-V2I capable vehicles are always active, i.e., full buffer traffic. 

\textbf{\textit{Benchmark system:}} 
We compare the described \textit{V2V+V2I} multihoming architecture with the same \textit{V2I} network but \textbf{without} V2V relaying, i.e., where vehicles \textbf{do not} relay data to form \textit{V2V} relaying clusters and are only be connected to the infrastructure if they are within range of an RSU.

\section{V2V Cluster Characterization}
\label{sec:cluster}
\begin{definition}(\textbf{Vehicle Relay cluster})\label{cluster_def} A V2V relay cluster is a group of vehicles that can inter-communicate without the network infrastructure, i.e., each vehicle has a direct connectivity link with at least one other vehicle in its cluster.
\end{definition}

A typical cluster is characterized by $(N, L, M),$ where $N$ and $L$ are random variables denoting the number of vehicles and communication range (length) of the cluster, respectively; and $M$ denotes the number of RSUs that the cluster is connected to, see Figure \ref{fig:example_highway}. 
The performance analysis will be based on characterizing the joint distribution of $N, L$ and $M.$

For readability purposes, the proofs of all theoretical results have been relegated to the Appendix.

\begin{lemma}(\textbf{Number of vehicles in a cluster})\label{lem:mdi}
The number of vehicles ${N}$ in a typical cluster follows a geometric distribution with parameter $\varphi=1-\gamma (1-e^{-\lambda_v d}),$ i.e.;
$$p_{N}(n)=\varphi\left(1-\varphi\right)^{n-1},$$
and $\mathbb{E}[N]={1}/{\varphi}$.
Consequently, under full market penetration, ${N}$ follows a geometric distribution with parameter $e^{-\lambda_v d}$.

\noindent(\textbf{Cluster communication length})
The typical cluster length ${L}$ distribution can be obtained by the inverse Laplace transform $\mathcal{L}^{-1}(\cdot)$ as follows:
$$f_{L}(l)=\mathcal{L}^{-1}\left(\frac{e^{-2sd} \varphi}{1-M_T(-s)+\varphi M_T(-s)}\right)(l),$$ 
where $$M_T(s)=\frac{\lambda_v e^{d (s-\lambda_v)}-\lambda_v}{(s-\lambda_v)(1-e^{-\lambda_v d})};$$
and the conditional distribution of $L$ given $N=n$ is given by
$f_{L\mid N}(l\mid n)=\mathcal{L}^{-1}\left(e^{-2 s d}\left[M_T(-s)\right]^{n-1}\right)(l).$
For the case of full market penetration, the cluster length ${L}$ distribution is:
{\color{black}$$f_{L}(l)=\mathcal{L}^{-1}\left(\frac{e^{-d (2s+\lambda_v)} \left(s+\lambda_v\right)}{s+\lambda_v e^{d (s-\lambda_v)}}\right)(l).$$ }

\noindent(\textbf{Number of connected RSUs})
The conditional CDF of the number of RSUs $M$ serving a cluster of length $L$  is given by
\begin{align*}
F_{M\mid L}(m\mid l)
&=\begin{cases}
1 & \mbox{if } m\lambda_r^{-1}<l\\
1-\frac{l}{m\cdot \lambda_r^{-1}}   & \mbox{if } (m-1) \lambda_r^{-1}< l \le m \lambda_r^{-1}\\
0 &  \mbox{otherwise} 
\end{cases}, 
\end{align*}
and the conditional CDF of the number of RSUs that serve a cluster with $N=n$ vehicles is:
\begin{align}
F_{M\mid N}\left(m\mid n\right)&=F^c_{L\mid N}((m-1)\lambda_r^{-1}\mid n) \nonumber\\&-\int\limits_{(m-1)\cdot \lambda_r^{-1}}^{m\cdot \lambda_r^{-1}} \frac{l\cdot f_{L\mid N}(l\mid n)}{m\cdot \lambda_r^{-1}} dl. \label{eq:M_given_N}
\end{align}
Finally, the CDF of $M$ is given by:
$$F_{M}\left(m\right)=\sum\limits_{n=1}^{\infty} p_{N}(n) F_{M\mid N}\left(m\mid n\right),\qquad {\color{black}\text{ for } m\in\mathbb{N}}.$$ 
\end{lemma}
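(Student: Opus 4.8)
The plan is to exploit the renewal structure of the marked PPP $\Phi_v$ (each point independently tagged V2V-capable with probability $\gamma$, legacy otherwise), and to interpret a ``typical cluster'' via the Palm distribution of the stationary point process of cluster left-endpoints. \emph{Number of vehicles.} Index the V2V-capable vehicles as $\cdots<W_{-1}<W_0<W_1<\cdots$ and, for each $i$, say a \emph{break} occurs between $W_i$ and $W_{i+1}$ if they cannot communicate, i.e.\ their spacing exceeds $d$ or some legacy vehicle lies between them. The key observation is that no break occurs exactly when the first point of $\Phi_v$ to the right of $W_i$ is both V2V-capable and within distance $d$; since for a PPP the distance to the next point is $\mathrm{Exp}(\lambda_v)$ and is independent of its mark, this happens with probability $\gamma(1-e^{-\lambda_v d})$, so a break has probability $\varphi=1-\gamma(1-e^{-\lambda_v d})$. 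Decomposing the line into the i.i.d.\ excursions between successive V2V vehicles shows the break indicators are i.i.d.\ Bernoulli$(\varphi)$. Viewed from its leftmost vehicle, a typical cluster is that vehicle together with the maximal run of subsequent non-breaks, so $N-1$ is geometric on $\{0,1,\dots\}$ with parameter $\varphi$; this gives $p_N(n)=\varphi(1-\varphi)^{n-1}$, $\mathbb{E}[N]=1/\varphi$, and $\varphi=e^{-\lambda_v d}$ when $\gamma=1$.

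\emph{Cluster length.} Conditioned on $N=n$, the cluster has $n-1$ within-cluster gaps; each such gap, conditioned on not being a break, is an $\mathrm{Exp}(\lambda_v)$ spacing truncated to $[0,d]$, and the $n-1$ of them are mutually independent by the same excursion independence. Writing $L=2d+\sum_{i=1}^{n-1}T_i$ (the $2d$ accounts for the reach $d$ beyond each end vehicle), the conditional Laplace transform factorizes as $e^{-2sd}M_T(-s)^{n-1}$, with $M_T(s)=\mathbb{E}[e^{sT}]$ obtained by integrating the truncated-exponential density; this yields the stated $M_T$. Summing the geometric series $\sum_{n\ge1}p_N(n)e^{-2sd}M_T(-s)^{n-1}$ gives the unconditional transform $\varphi e^{-2sd}/\bigl(1-(1-\varphi)M_T(-s)\bigr)$, and substituting $\varphi=e^{-\lambda_v d}$ and simplifying produces the full-penetration form; inverting the transforms yields $f_L$ and $f_{L\mid N}$.

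\emph{Number of RSUs.} Conditioned on $L=l$, the cluster's coverage region is a single interval of length $l$ (consecutive radius-$d$ disks overlap because within-cluster gaps are at most $d$), and by stationarity of $\Phi_v$ relative to the deterministic $\lambda_r^{-1}$-periodic RSU grid, the interval's offset modulo $\lambda_r^{-1}$ is uniform and independent of $l$. Counting grid points in a randomly shifted interval of length $l$ is elementary: the count is deterministic except across one lattice cell, which gives the piecewise $F_{M\mid L}(\cdot\mid l)$, with Assumption~\ref{assum:2} ensuring a single-vehicle cluster reaches at most one RSU. Taking $\mathbb{E}[F_{M\mid L}(m\mid L)\mid N=n]$ by integrating against $f_{L\mid N}(\cdot\mid n)$ and splitting at the lattice points $(m-1)\lambda_r^{-1}$ and $m\lambda_r^{-1}$ gives $F_{M\mid N}(m\mid n)$ in terms of $F^c_{L\mid N}$ and a truncated first-moment integral; averaging over $N$ with weights $p_N(n)$ then gives $F_M$.

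The Laplace-transform algebra and the lattice count are routine; the step I would be most careful about is the first one, namely formalizing ``typical cluster'' so that $N$ comes out plainly geometric rather than size-biased or a convolution of two geometrics (the latter being what one gets by conditioning on a typical \emph{vehicle} lying in the cluster). This relies on using the Palm distribution of cluster left-endpoints and on the independence of the break indicators, which itself rests on the i.i.d.\ excursion decomposition of the marked PPP.
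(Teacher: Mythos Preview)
Your proposal is correct and follows essentially the same approach as the paper: both derive $\varphi$ as the probability that the \emph{next} point of $\Phi_v$ is either beyond $d$ or a legacy vehicle, obtain $N$ as geometric from i.i.d.\ break indicators, write $L=2d+\sum_{i=1}^{N-1}T_i$ with $T_i$ truncated-exponential, sum the geometric series for the MGF, and then use stationarity of the cluster relative to the periodic RSU grid for $F_{M\mid L}$. Your explicit treatment of the Palm viewpoint and the size-biasing caveat (typical cluster vs.\ typical vehicle) is more careful than the paper's exposition, but the underlying argument is identical.
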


\begin{figure*}[t!]
\includegraphics[width=\textwidth]{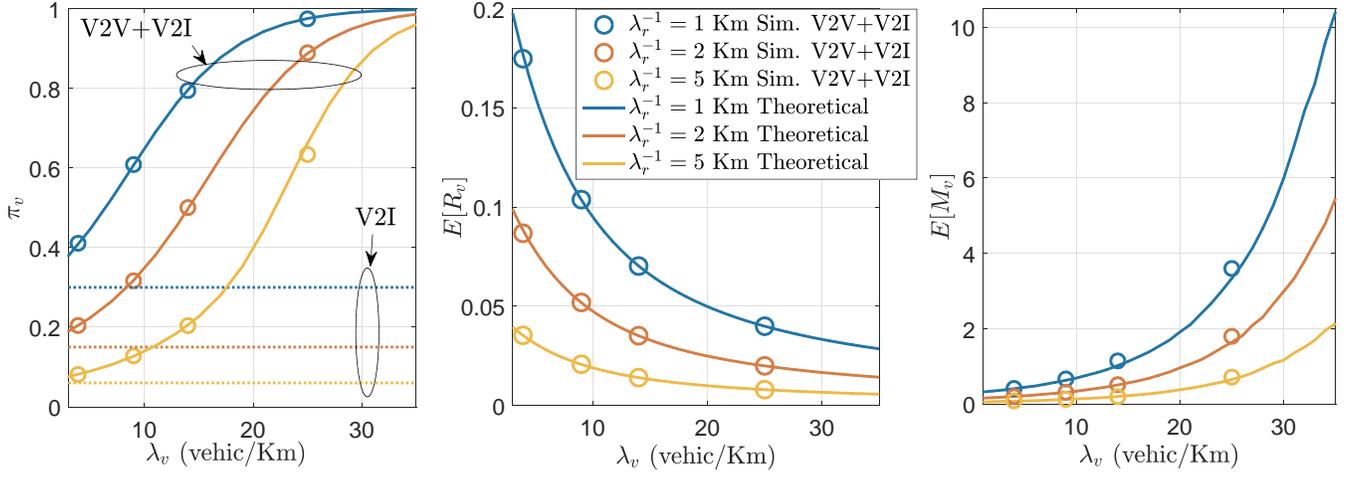}
\vspace{-0.9cm}
\caption{Leftmost subfigure: Vehicle coverage probability for V2V and no V2V cases.  Center: Expected RSU network throughput. Rightmost: Expected number of RSUs connected per typical vehicle. In all cases $d=150m$ and $\gamma=1$. Legend in the central figure applies to all plots.}
\label{fig:results1}
\end{figure*}

\section{Performance analysis} \label{sec-perf_analysis} 
\label{sec:perf_analysis}
In this section, we analyze the performance of V2V+V2I multihoming networks and compare it to the V2I only network architecture. We distinguish performance metrics corresponding to the V2I networks via variables with an asterisk superscript, i.e., $R_v$ and $R_v^*$ will denote the rate of a typical vehicle in the V2V+V2I and the V2I only networks respectively. Also, we will evaluate the performance seen by a typical vehicle, indicating the related metrics by a subscript $v$.
\subsection{Typical Vehicle Coverage Probability}
In this section, we shall refer to the coverage probability as the probability that a typical vehicle is connected to one or more RSUs, either directly or through V2V relaying.
Clearly the benefit of the V2V+V2I network is that it allows vehicles to relay messages from RSUs, increasing the coverage probability. We let $\pi_v$ denote the probability that a typical vehicle is connected (possible through relaying) to the infrastructure.
Specifically, note that the typical vehicle coverage probability for the benchmark V2I network  is independent of the traffic intensity. By contrast, in the V2V+V2I network, higher traffic intensities lead to longer and bigger clusters, increasing the typical vehicle coverage probability. The following result addresses the coverage probability for both networks assuming $2d \leq \lambda_r^{-1}$.
\begin{lemma}(\textbf{Coverage probability})\label{lem:coverage} 
The coverage probability of a \textbf{typical vehicle} in the V2V+V2I network is given by:
\begin{equation}
\pi_v=\varphi^2 \cdot\sum\limits_{n=1}^{\infty} n \cdot(1-\varphi)^{n-1}\cdot {\color{black}F^c_{M\mid N}\left(0\mid n\right),}
\label{eq:coverage_V2V}
\end{equation}
where $F^c_{M\mid N}\left(0\mid n\right)$ is the probability that a cluster is connected to at least 1 RSU given $N=n$, given in Lemma~\ref{lem:mdi}.

The coverage probability of a typical vehicle in a V2I network is independent of $\lambda_v$ and given by:
\begin{equation}
\pi_v^{*}=\frac{2d}{\lambda_r^{-1}},\quad \mbox{ for } ~ ~ ~ 2d\le \lambda_r^{-1}.
\label{eq:coverage_noV2V}
\end{equation}
\end{lemma}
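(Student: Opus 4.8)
The plan is to treat the two networks separately, dispatching the benchmark first. In the V2I network a typical vehicle is covered exactly when it lies within distance $d$ of some RSU. Since the RSU grid is deterministic with spacing $\lambda_r^{-1}$ and, by Assumption~\ref{assum:2}, $2d\le\lambda_r^{-1}$, the coverage intervals of length $2d$ centered at the RSUs are pairwise disjoint and occupy a fraction $2d/\lambda_r^{-1}$ of the road. Because $\Phi_v$ is a stationary PPP independent of the RSU locations, the offset of the typical vehicle to its nearest RSU is uniform on an interval of length $\lambda_r^{-1}$ --- equivalently, by the ergodic theorem the long-run fraction of vehicles lying in coverage intervals equals the fraction of road those intervals occupy --- which gives $\pi_v^*=2d/\lambda_r^{-1}$.

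For the V2V+V2I network I would first observe that a typical vehicle is covered if and only if its V2V cluster is connected to at least one RSU. Indeed, a cluster is by Definition~\ref{cluster_def} a connected component of the V2V graph, so if any of its members lies within range $d$ of an RSU then relaying delivers service to all members, whereas if no member is within range of an RSU no member is served. Hence coverage of the typical vehicle is precisely the event $\{M\ge1\}$ for the cluster it belongs to, and where the vehicle sits inside that cluster is irrelevant.

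The remaining step is to convert this event --- which is naturally described for a \emph{typical cluster} --- into a probability for a \emph{typical vehicle}. A typical vehicle is more likely to lie in a cluster containing many vehicles, so the cluster hosting it is the typical cluster size-biased by $N$. Making this precise (via Palm calculus for the stationary cluster point process, or by the ergodic theorem on a long segment $[0,T]$ with boundary clusters contributing vanishingly) yields
\begin{equation*}
\pi_v=\frac{\mathbb{E}\!\left[N\,\mathbbm{1}\{M\ge1\}\right]}{\mathbb{E}[N]},
\end{equation*}
both expectations taken over the typical cluster, since the number of covered vehicles in a cluster equals $N$ if $M\ge1$ and $0$ otherwise. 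As the size-biasing weight $N$ depends only on the vehicle count, the conditional law of $M$ given $N=n$ is left unchanged, so $\mathbb{E}[N\,\mathbbm{1}\{M\ge1\}]=\sum_{n\ge1}n\,p_N(n)\,F^c_{M\mid N}(0\mid n)$; substituting $p_N(n)=\varphi(1-\varphi)^{n-1}$ and $\mathbb{E}[N]=1/\varphi$ from Lemma~\ref{lem:mdi} and simplifying produces $\pi_v=\varphi^2\sum_{n\ge1}n(1-\varphi)^{n-1}F^c_{M\mid N}(0\mid n)$.

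The main obstacle is the size-biasing step: one must argue carefully that sampling the typical vehicle induces exactly the $N$-size-biased cluster distribution and that this reweighting does not perturb the conditional law of $M$ given $N$. The coverage characterization and the final algebraic simplification are routine.
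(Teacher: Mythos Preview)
Your proposal is correct and follows essentially the same approach as the paper: for the V2I case both you and the paper invoke the ``fraction of road covered'' argument to get $\pi_v^*=2d/\lambda_r^{-1}$, and for the V2V+V2I case both pass from the typical cluster to the typical vehicle via the $N$-size-biased law $p_{N_v}(n)=n\,p_N(n)/\mathbb{E}[N]$ and then condition on $N$ to bring in $F^c_{M\mid N}(0\mid n)$. Your write-up is in fact somewhat more explicit than the paper's about \emph{why} the size-biasing leaves the conditional law of $M$ given $N$ intact, which is a welcome clarification.
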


Numerical evaluations of \eqref{eq:coverage_V2V} and \eqref{eq:coverage_noV2V} are displayed in Figure \ref{fig:results1} (left). As expected, the coverage probability is always greater for V2V+V2I and increases rapidly to 1 with the traffic load intensity $\lambda_v$ on the road. Figure~\ref{fig:results_gamma} exhibits the coverage probability for V2V+V2I as a function of the penetration $\gamma$; it shows that the sensitivity of the coverage to the traffic intensity is higher at higher $\gamma$, e.g., for $\gamma=0.9$ where the coverage probability attains a maximum for $\lambda_v\approx 25$ vehicles/km and varies notably with  $\lambda_v$. Indeed, increasing $\lambda_v$ increases the effect of the blocking vehicles, reaching a regime where long clusters are not possible and where $\pi_v$ is independent of $\lambda_v$, consistently with \eqref{eq:coverage_noV2V}. Therefore, if $\gamma <1$, $\pi_v$ eventually decreases and converges back to the value presented in \eqref{eq:coverage_noV2V}.
\vspace{-0.2cm}

\begin{figure}[h]
        \centering        \includegraphics[width=\columnwidth]{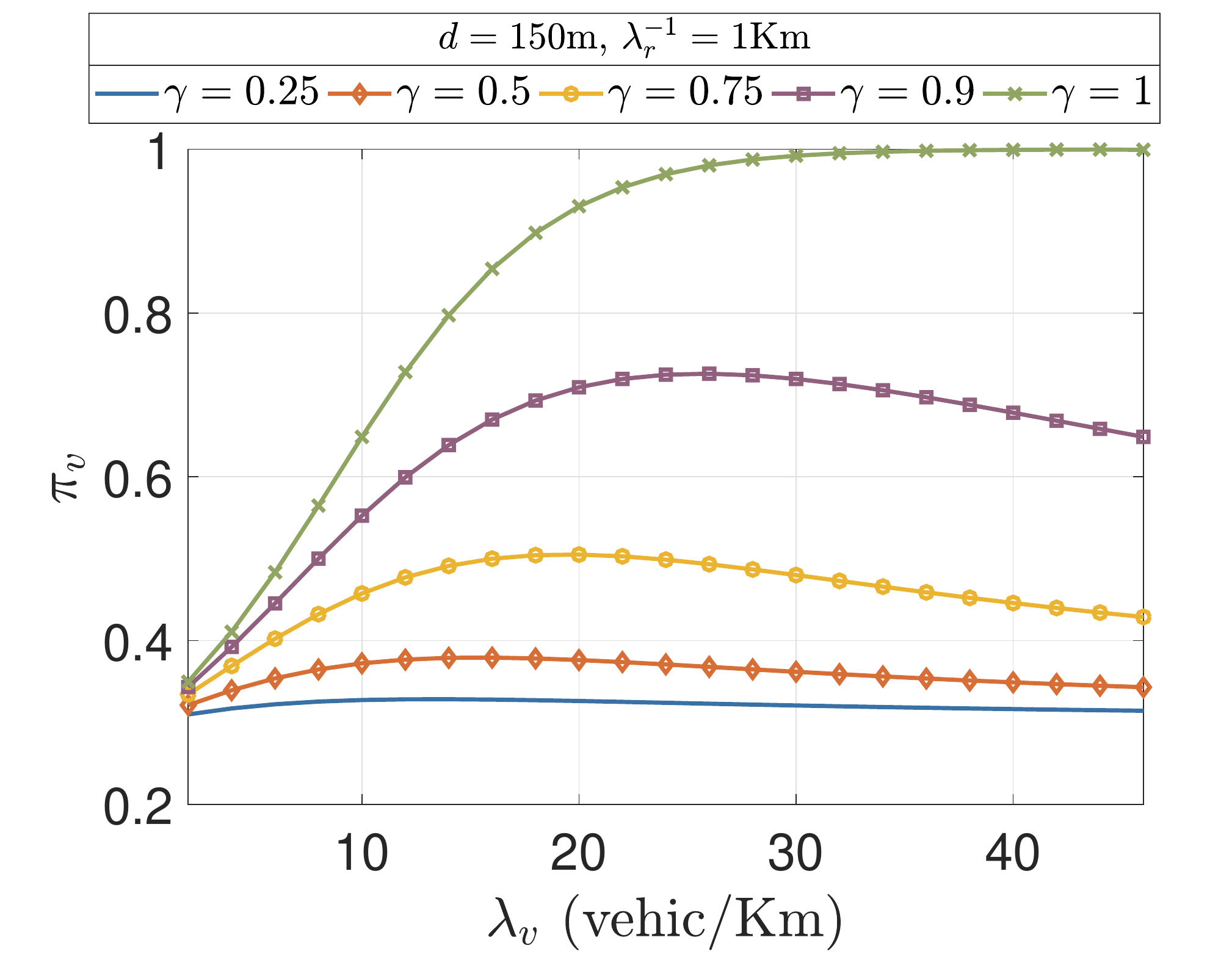}
        \vspace{-0.2cm}
        \caption{Impact of the load in the coverage probability for different market penetrations $\gamma$.}
        \label{fig:results_gamma}
\end{figure}
\vspace{-0.3cm}

\begin{figure}[h]
\centering
\includegraphics[width=\columnwidth]{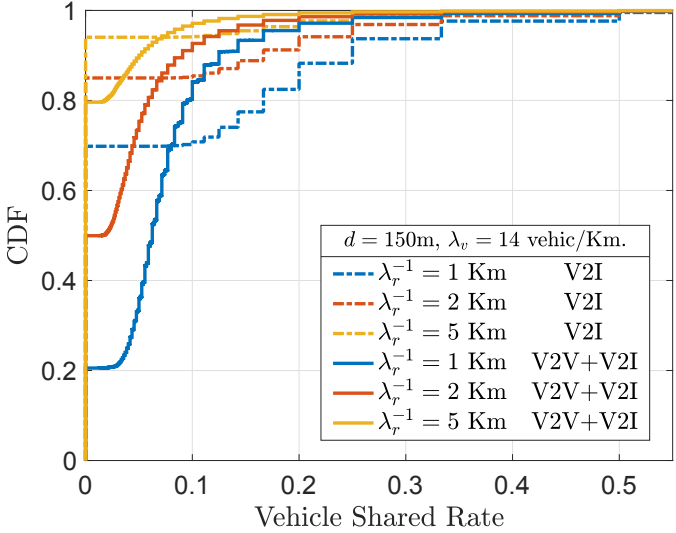}
\vspace{0.03cm}
\caption{Empirical CDF of the typical shared rate for V2I vs V2V+V2I and different inter-RSU distances ( $\gamma=\rho=1$).}
\label{fig:results_ecdf}
\end{figure}

\subsection{Typical Vehicle Shared Rate}
The shared rate seen by a typical vehicle is defined as its allocations of the multihomed RSU capacity of its cluster under max-min fair sharing and denoted by the random variable $R_v$. The shared rate, for both networks, i.e., V2V+V2I and V2I, thus depends on
$\lambda_v,\gamma, d, \rho^{\text{RSU}}$ and $\lambda_r^{-1}.$
\begin{theorem}(\textbf{Expected shared rate})~\label{thm:throughput}
The mean shared rate of a typical vehicle in the V2V+V2I and the V2I networks are equal, i.e.,  $\mathbb{E}[R_v]=\mathbb{E}[R_v^*]$ and given by:
\begin{equation}
\mathbb{E}[R_v] =\frac{\rho^{RSU}}{\gamma\lambda_v \lambda_r^{-1}}\left(1-e^{-2\gamma\lambda_v d}\right)\le\rho^{RSU} \frac{\mathbb{E}[M]}{\mathbb{E}[N]},
\label{eq:throughput}
\end{equation}
\end{theorem}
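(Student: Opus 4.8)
The plan is to prove the three assertions in \eqref{eq:throughput} separately: the closed form for $\mathbb{E}[R_v^\ast]$ by a direct Palm computation, the equality $\mathbb{E}[R_v]=\mathbb{E}[R_v^\ast]$ by a spatial-averaging (mass transport) argument, and the upper bound by working with the size-biased cluster seen by a typical vehicle.

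\textbf{Step 1 (the V2I formula).} Put a typical V2V-capable vehicle at the origin (Palm calculus for the PPP of intensity $\gamma\lambda_v$). By Lemma~\ref{lem:coverage} it lies within range of an RSU with probability $\pi_v^\ast=2d/\lambda_r^{-1}$, and since $2d\le\lambda_r^{-1}$ this RSU is unique; by Slivnyak's theorem the number $K$ of \emph{other} V2V-capable vehicles sharing that RSU is $\mathrm{Poisson}(2\gamma\lambda_v d)$, independently of the vehicle's offset inside the cell. Under max-min fair sharing of one RSU with full buffers the typical vehicle then receives $\rho^{\text{RSU}}/(1+K)$, so $\mathbb{E}[R_v^\ast]=\pi_v^\ast\,\rho^{\text{RSU}}\,\mathbb{E}[1/(1+K)]$, and the elementary identity $\mathbb{E}[1/(1+K)]=(1-e^{-\mu})/\mu$ for $K\sim\mathrm{Poisson}(\mu)$ gives the stated expression with $\mu=2\gamma\lambda_v d$.

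\textbf{Step 2 ($\mathbb{E}[R_v]=\mathbb{E}[R_v^\ast]$).} Here I would avoid a cluster-by-cluster calculation and instead compare spatial rate densities. The long-run rate delivered per unit length of road equals $\rho^{\text{RSU}}$ times the spatial density of \emph{fully utilized} RSUs, since under max-min fairness with full buffers every unit of a utilized RSU's capacity is actually delivered to some vehicle (even when the RSU is split among several clusters); by the exchange formula between spatial and Palm averages this same density equals $\gamma\lambda_v\,\mathbb{E}[R_v]$, up to a vanishing boundary term. The crucial observation is that an RSU at $x$ is utilized in the V2V+V2I network under \emph{exactly} the same event as in the V2I network, namely that $[x-d,x+d]$ contains a V2V-capable vehicle: any cluster reaching $x$ must have a member within $d$ of $x$, because consecutive cluster members lie within $d$ of each other, so a cluster straddling $x$ has a member in $[x-d,x+d]$, while a one-sided cluster reaches $x$ only through a member within $d$; the converse is immediate (a singleton cluster already reaches $x$). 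Hence the utilized-RSU density is $\lambda_r\bigl(1-e^{-2\gamma\lambda_v d}\bigr)$ in both networks, and dividing by the common user density $\gamma\lambda_v$ yields both $\mathbb{E}[R_v]=\mathbb{E}[R_v^\ast]$ and the closed form.

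\textbf{Step 3 (the upper bound) and the main obstacle.} Let $(N',M')$ be the cluster containing the typical vehicle; it is size-biased, $\mathbb{P}(N'=n,M'=m)=n\,\mathbb{P}(N=n,M=m)/\mathbb{E}[N]$ in terms of the typical cluster $(N,M)$ of Lemma~\ref{lem:mdi}. This cluster pools at most $M'\rho^{\text{RSU}}$ of RSU capacity, split equally among its $N'$ members, so $R_v\le M'\rho^{\text{RSU}}/N'$; taking expectations, the size-bias weight $n$ cancels the $1/n$, giving $\mathbb{E}[R_v]\le\rho^{\text{RSU}}\mathbb{E}[M'/N']=\rho^{\text{RSU}}\mathbb{E}[M]/\mathbb{E}[N]$, finite since $N$ is geometric. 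The delicate point is Step~2: making the spatial-to-Palm exchange precise and, above all, showing that relaying cannot enlarge the set of utilized RSUs — equivalently, that pooling capacity across a cluster only \emph{redistributes} rate among vehicles (reducing variability) without altering its spatial total. Indeed a naive per-cluster computation reproduces the weaker bound $\rho^{\text{RSU}}\mathbb{E}[M]/\mathbb{E}[N]$ of Step~3, which is strictly larger than $\mathbb{E}[R_v]$ as soon as two clusters can share an RSU, so the exact formula genuinely requires the global accounting argument.
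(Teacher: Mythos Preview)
Your proposal is correct and follows essentially the same three-part strategy as the paper: the V2I closed form via Slivnyak's theorem and the Poisson identity $\mathbb{E}[1/(1+K)]=(1-e^{-\mu})/\mu$; the equality $\mathbb{E}[R_v]=\mathbb{E}[R_v^\ast]$ by coupling the two architectures and observing that the set of busy RSUs (those with a V2V-capable vehicle within $d$) is identical in both, so the total delivered rate and hence the per-vehicle mean coincide; and the upper bound via size-biasing the typical cluster so that the $N'$ in the denominator cancels the weight $n$. Your Step~2 spells out more carefully what the paper compresses into a single sentence (``by coupling the vehicle locations\ldots the number of busy RSUs is the same''), but the underlying argument is the same.
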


\noindent where $\mathbb{E}[M]$ and $\mathbb{E}[N]$ can be computed using Lemma~\ref{lem:mdi}.

{\color{black} Note that the mean rate for both architectures are equal because the number of busy RSUs is the same, independently of the underlying V2V connectivity. Assuming all vehicles are infinitely backlogged the overall downlink rate is the same and thus so is the mean rate per vehicle.}

Although V2V relaying collaboration does not alter the \textit{mean} shared rate seen by vehicles, see Figure \ref{fig:results1} (center); it significantly impacts the coverage probability and the shared rate \textit{distribution}.

\begin{theorem}(\textbf{Shared rate distribution}) \label{thm:cdf_rate}
The CDF of the shared rate in a V2V+V2I network  $R_v$ satisfies: 
\begin{align}
F_{R_v}(r)\ge 1- \varphi^2\sum\limits_{n=1}^{\infty}  n ~ (1-\varphi)^{n-1} F^c_{M\mid N}&\left(\Bigl\lceil\frac{r n}{\rho^{\text{RSU}}}\Bigr\rceil\mid n\right), \label{eq:cdfR}
\end{align}
and $P(R_v=0)=1-\pi_v$ while that in the V2I network is given by 
\begin{equation}
F_{R_v^{*}}(r)=1-\frac{2d}{\lambda_r^{-1}}\cdot Q\left( \frac{\rho^{RSU}}{r}-1 , ~ ~ 2~\gamma~\lambda_v~d \right),
\label{eq:cdfR_noV2V}
\end{equation}
where $Q$ is the regularized gamma function and $P(R_v^{*}=0)=1-\pi_v^{*}$.
Furthermore, ${R_v^{*}}\ge^{\text{icx}}{R_v}$, where \textit{icx} dominance\footnote{The definition for icx dominance is found in Definition \ref{def:dominance} in the appendix} implies:
\begin{equation}
\mbox{Var}({R_v^{*}})\ge\mbox{Var}({R_v}).
\label{eq:variance_rate}
\end{equation}
\end{theorem}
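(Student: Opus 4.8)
The plan is to prove the four assertions in turn, working throughout on a common realization of the vehicle process $\Phi_v$ and the RSU grid, and to derive \eqref{eq:variance_rate} as a corollary of the \emph{icx} dominance together with Theorem~\ref{thm:throughput}.

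For the V2V+V2I bound \eqref{eq:cdfR}, the first step is to pin down the typical vehicle's rate. By Assumption~\ref{assum:3} and max-min fairness, all vehicles of a cluster with $N$ vehicles connected to $M$ RSUs share a common rate, equal to $M\rho^{\text{RSU}}/N$ unless one of those RSUs is also in range of a vehicle of an adjacent cluster, which Assumption~\ref{assum:2} does not preclude (it only forbids two RSUs from covering a common point, whereas a single RSU can feed two clusters separated by a gap of length in $(d,2d)$). Hence $R_v\le M\rho^{\text{RSU}}/N$. Next I use the inspection paradox: by Lemma~\ref{lem:mdi} a typical vehicle lies in a cluster with $N=n$ vehicles with probability $n\,p_N(n)/\mathbb{E}[N]=\varphi^2 n(1-\varphi)^{n-1}$; conditioning on $N=n$, using the conditional law of $M$ from Lemma~\ref{lem:mdi}, bounding $P(R_v>r\mid N=n)\le P(M>rn/\rho^{\text{RSU}}\mid N=n)=F^c_{M\mid N}(\lfloor rn/\rho^{\text{RSU}}\rfloor\mid n)$, then rounding the threshold conservatively up to $\lceil rn/\rho^{\text{RSU}}\rceil$ to absorb the boundary-sharing slack, and summing over $n$ gives \eqref{eq:cdfR}. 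The atom is immediate: $R_v=0$ exactly when the typical vehicle's cluster connects to no RSU, which has probability $1-\pi_v$ by Lemma~\ref{lem:coverage}.

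For the V2I law \eqref{eq:cdfR_noV2V}: Assumption~\ref{assum:2} makes the RSU coverage intervals disjoint, so a typical V2V-capable vehicle is served by at most one RSU, and by stationarity of $\Phi_v$ and its independence from the RSU grid it is covered with probability $2d\lambda_r=\pi_v^*$. On that event, Slivnyak's theorem gives that the remaining V2V-capable vehicles form an independent $\mathrm{PPP}(\gamma\lambda_v)$, so the number $K$ of other vehicles in the serving RSU's interval is $\mathrm{Poisson}(2\gamma\lambda_v d)$ and independent of the covering event, and under max-min fairness $R_v^*=\rho^{\text{RSU}}/(1+K)$. Therefore $F_{R_v^*}(r)=1-\pi_v^*\,P(K<\rho^{\text{RSU}}/r-1)$ for $r<\rho^{\text{RSU}}$, and the Poisson--gamma identity $P(\mathrm{Poisson}(\mu)\le j)=Q(j+1,\mu)$ turns this into \eqref{eq:cdfR_noV2V}; the atom $P(R_v^*=0)=1-\pi_v^*$ is again the complement of the coverage probability.

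For the \emph{icx} dominance, the structural fact is that on a common realization the set of busy RSUs is the same in both networks, and the V2V+V2I per-vehicle rate within a cluster $C$ is exactly the average over the $|C|$ vehicles of $C$ of their V2I rates — up to the same downward boundary correction as above. An average is a conditional expectation, hence a convex-order reduction (equivalently: within $C$ the V2I rate vector majorizes the V2V+V2I one, so by the Hardy--Littlewood--P\'olya theorem the rate of a uniformly chosen vehicle of $C$ is larger in the convex order in the V2I network); a pointwise downward correction is a stochastic-order reduction; both are \emph{icx}-order reductions; and mixing over the common law of the typical vehicle's cluster, under which \emph{icx} order is preserved, yields $R_v^*\ge^{\text{icx}}R_v$. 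Finally, applying the defining inequality of \emph{icx} dominance to the increasing convex map $x\mapsto x^2$ gives $\mathbb{E}[(R_v^*)^2]\ge\mathbb{E}[R_v^2]$, whence \eqref{eq:variance_rate} follows because $\mathbb{E}[R_v^*]=\mathbb{E}[R_v]$ by Theorem~\ref{thm:throughput}.

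The hard part will be the boundary-sharing terms: because Assumption~\ref{assum:2} rules out only overlapping RSU ranges, a single RSU can still feed two clusters, so ``cluster rate $=M\rho^{\text{RSU}}$'' and ``the within-cluster V2I and V2V+V2I rate vectors have equal sums'' only hold up to a correction, and one must check that the cross-cluster max-min fair reallocation of such an RSU never undoes the concentration produced by within-cluster pooling. This is precisely what forces the inequality in \eqref{eq:cdfR} and the \emph{icx} rather than convex order in the last claim; I expect the clean way to handle it is to bound $R_v$ above by the idealized equal share $M\rho^{\text{RSU}}/N$ and treat the discrepancy as a separate stochastic-order term, leaving the convex-order/majorization argument untouched.
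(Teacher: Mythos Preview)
Your derivations of \eqref{eq:cdfR} and \eqref{eq:cdfR_noV2V} match the paper's: size-bias to the typical vehicle, bound $R_v\le M\rho^{\text{RSU}}/N$, then condition on $N$ and use Lemma~\ref{lem:mdi}; for the V2I side, Slivnyak plus the Poisson--gamma identity. (Minor point: the ceiling in \eqref{eq:cdfR} is not there to ``absorb boundary-sharing slack''; the sharing is already absorbed by the inequality $R_v\le M\rho^{\text{RSU}}/N$, and for integer $M$ one has $P(M>x)=F^c_{M\mid N}(\lfloor x\rfloor)$, so replacing $\lfloor\cdot\rfloor$ by $\lceil\cdot\rceil$ just weakens the bound at integer thresholds.)

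The gap is in the \emph{icx} argument. Your decomposition ``within each cluster, the V2V+V2I rate equals the cluster average of the V2I rates, up to a \emph{downward} boundary correction'' is not correct: when an RSU is shared by two clusters, the max-min fair reallocation can push a cluster's per-vehicle rate \emph{above} the average of its V2I rates. Concretely, take cluster $A$ with two vehicles, cluster $B$ with one, and a single RSU in range of one vehicle of $A$ and the vehicle of $B$. In V2I the rates are $(\rho/2,0,\rho/2)$; in V2V+V2I max-min fair gives $(\rho/3,\rho/3,\rho/3)$. Inside $A$ the V2I average is $\rho/4$, but the V2V+V2I rate is $\rho/3>\rho/4$, so the within-cluster majorization you invoke fails (the two vectors do not even have the same sum), and your proposed fix of bounding $R_v$ above by $M\rho^{\text{RSU}}/N$ goes the wrong way: that surrogate has strictly larger mean than $R_v^{*}$ (in the example, $2\rho/3$ versus $\rho/3$), so it cannot sit below $R_v^{*}$ in the \emph{icx} order.

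The paper sidesteps all of this by working \emph{globally} rather than cluster by cluster. Couple the two networks on the same realization of $\Phi_v$ and the RSU grid; then the V2I max-min rate vector is a feasible allocation in the V2V+V2I system (simply refrain from relaying), and both allocations exhaust the same set of busy RSUs, hence have the same total. Now invoke the standard fact that the max-min fair allocation is majorized by every other feasible allocation with the same sum (lexicographic optimality, cf.\ the references cited in the paper for Theorem~\ref{thm:cdf_rate}); applied to the V2V+V2I feasible region this gives that the V2V+V2I max-min vector is majorized by the V2I vector, whence $R_v\le^{\text{icx}}R_v^{*}$, and together with $\mathbb{E}[R_v]=\mathbb{E}[R_v^{*}]$ from Theorem~\ref{thm:throughput} you get \eqref{eq:variance_rate} exactly as you wrote.
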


Numerical evaluations of \eqref{eq:cdfR} and \eqref{eq:cdfR_noV2V} are shown in Figure~\ref{fig:results_ecdf} and the resulting variability in Figure~\ref{fig:results_var}. These demonstrate the superiority of the V2V+V2I network architecture in terms of providing, not only improved connectivity, but also a substantial decrease in the shared rate variability of a typical user.
Note that in Figure~\ref{fig:results_var} we have plotted the dispersion of the per-user shared rate, defined as $\sigma/\mu,$ i.e., the standard-deviation over the mean of the per user shared rate. {\color{black}In addition, we have displayed the lower bound on the dispersion for the non-V2V scenario, given by the dispersion as $\lambda_v\to\infty$.}
{\color{black}It can be observed} that the rate dispersion converges to 0 for the V2V+V2I network. By contrast, in the V2I network the dispersion of the shared rate is bounded below. These results show that the V2V+V2I network at reasonably high vehicle density will provide them with an increasingly stable and almost deterministic shared rate to vehicles. 

\subsection{Multihoming Redundancy}
RSU multihoming provides connection redundancy to a cluster. This redundancy in principle improves the reliability of vehicle connectivity in presence of unreliable/obstructed V2I links. The following result follows immediately from \eqref{eq:throughput} in Theorem~\ref{thm:throughput}.
\begin{corollary} (\textbf{Multihoming / redundancy}) 
The \textbf{expected number of RSUs} $\mathbb{E}[M]$ per cluster  is bounded by:
\begin{equation}
\mathbb{E}[M]\ge \frac{ 1- e^{-2 \gamma \lambda_v d}}{\gamma\lambda_v \lambda_r^{-1} (1-\gamma+\gamma \cdot e^{-\gamma \lambda_v d})},
\end{equation}
which for full market penetration corresponds to
\begin{equation}
\mathbb{E}[M]\ge \frac{ e^{\lambda_v d}- e^{-\lambda_v d}}{\lambda_v \lambda_r^{-1}}=\frac{2\sinh(\lambda_v d)}{\lambda_v \lambda_r^{-1}}.
\end{equation}
\end{corollary}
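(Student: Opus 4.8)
The plan is to read the bound directly off Theorem~\ref{thm:throughput}, which already carries all of the probabilistic content; the corollary itself is pure algebra. Theorem~\ref{thm:throughput} supplies two facts at once: the closed form $\mathbb{E}[R_v]=\frac{\rho^{RSU}}{\gamma\lambda_v\lambda_r^{-1}}\bigl(1-e^{-2\gamma\lambda_v d}\bigr)$ and the inequality $\mathbb{E}[R_v]\le \rho^{RSU}\,\mathbb{E}[M]/\mathbb{E}[N]$ from \eqref{eq:throughput}. Since $\mathbb{E}[N]$, $\rho^{RSU}$, $\gamma$, $\lambda_v$ and $\lambda_r^{-1}$ are all strictly positive, the inequality can be rearranged without flipping its sense into
$$\mathbb{E}[M]\ \ge\ \frac{\mathbb{E}[N]\,\mathbb{E}[R_v]}{\rho^{RSU}}.$$

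Next I would substitute the two ingredients. Plugging the closed form for $\mathbb{E}[R_v]$ into the right-hand side cancels $\rho^{RSU}$, leaving $\mathbb{E}[M]\ge \mathbb{E}[N]\bigl(1-e^{-2\gamma\lambda_v d}\bigr)/(\gamma\lambda_v\lambda_r^{-1})$. For $\mathbb{E}[N]$ I would invoke Lemma~\ref{lem:mdi}, which gives $\mathbb{E}[N]=1/\varphi$ with $\varphi=1-\gamma(1-e^{-\lambda_v d})$; replacing $\mathbb{E}[N]$ by $1/\varphi$ then produces the first displayed bound of the corollary.

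Finally, for full market penetration I would specialize to $\gamma=1$, so that $\varphi=e^{-\lambda_v d}$ and hence $\mathbb{E}[N]=e^{\lambda_v d}$. The bound then becomes $e^{\lambda_v d}\bigl(1-e^{-2\lambda_v d}\bigr)/(\lambda_v\lambda_r^{-1})=(e^{\lambda_v d}-e^{-\lambda_v d})/(\lambda_v\lambda_r^{-1})$, and applying the identity $e^{x}-e^{-x}=2\sinh x$ yields the stated $2\sinh(\lambda_v d)/(\lambda_v\lambda_r^{-1})$.

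There is essentially no obstacle to surmount here: the only points deserving a moment's care are checking that the rearrangement of the inequality is legitimate (it is, since we divide only by positive quantities) and that we pair the bound $\mathbb{E}[R_v]\le\rho^{RSU}\mathbb{E}[M]/\mathbb{E}[N]$ with the exact value of $\mathbb{E}[R_v]$ in the direction that produces a \emph{lower} bound on $\mathbb{E}[M]$. All the genuine work — the max-min fair allocation argument together with the renewal-type count of busy RSUs behind Theorem~\ref{thm:throughput}, and the geometric law of $N$ in Lemma~\ref{lem:mdi} — has already been done upstream, which is why the corollary is claimed to follow immediately.
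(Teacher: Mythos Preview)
Your approach is exactly what the paper intends: it states that the corollary ``follows immediately from \eqref{eq:throughput} in Theorem~\ref{thm:throughput}'', and your rearrangement $\mathbb{E}[M]\ge \mathbb{E}[N]\,\mathbb{E}[R_v]/\rho^{RSU}$ together with Lemma~\ref{lem:mdi} is precisely that immediate step. One small caveat: carrying out the substitution literally gives $\varphi=1-\gamma+\gamma e^{-\lambda_v d}$ in the denominator, whereas the displayed bound has $\gamma e^{-\gamma\lambda_v d}$; this is almost certainly a typo in the paper (the full-penetration specialization and the derivation both point to $e^{-\lambda_v d}$), so your claim that the substitution ``produces the first displayed bound'' is correct in spirit but not symbol-for-symbol.
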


As can be observed from this equation, $\mathbb{E}[M]$ i.e., the expected number of RSUs that the cluster of a typical vehicle is connected to grows rapidly with the traffic intensity $\lambda_v$ and the vehicle communication range $d.$ 
A similar trend is observed in Figure \ref{fig:results1} (right) where we have plotted $\mathbb{E}[M_v]$, the mean number of RSUs a typical vehicle would see its cluster connected to. We see a rapid increase in the expected number of RSUs as $\lambda_v$ increases. These results confirm an exponential growth of redundancy suggesting possibly substantial improvements in reliability of multihomed systems.
\begin{figure}[t]
\centering 
\includegraphics[width=1\columnwidth]{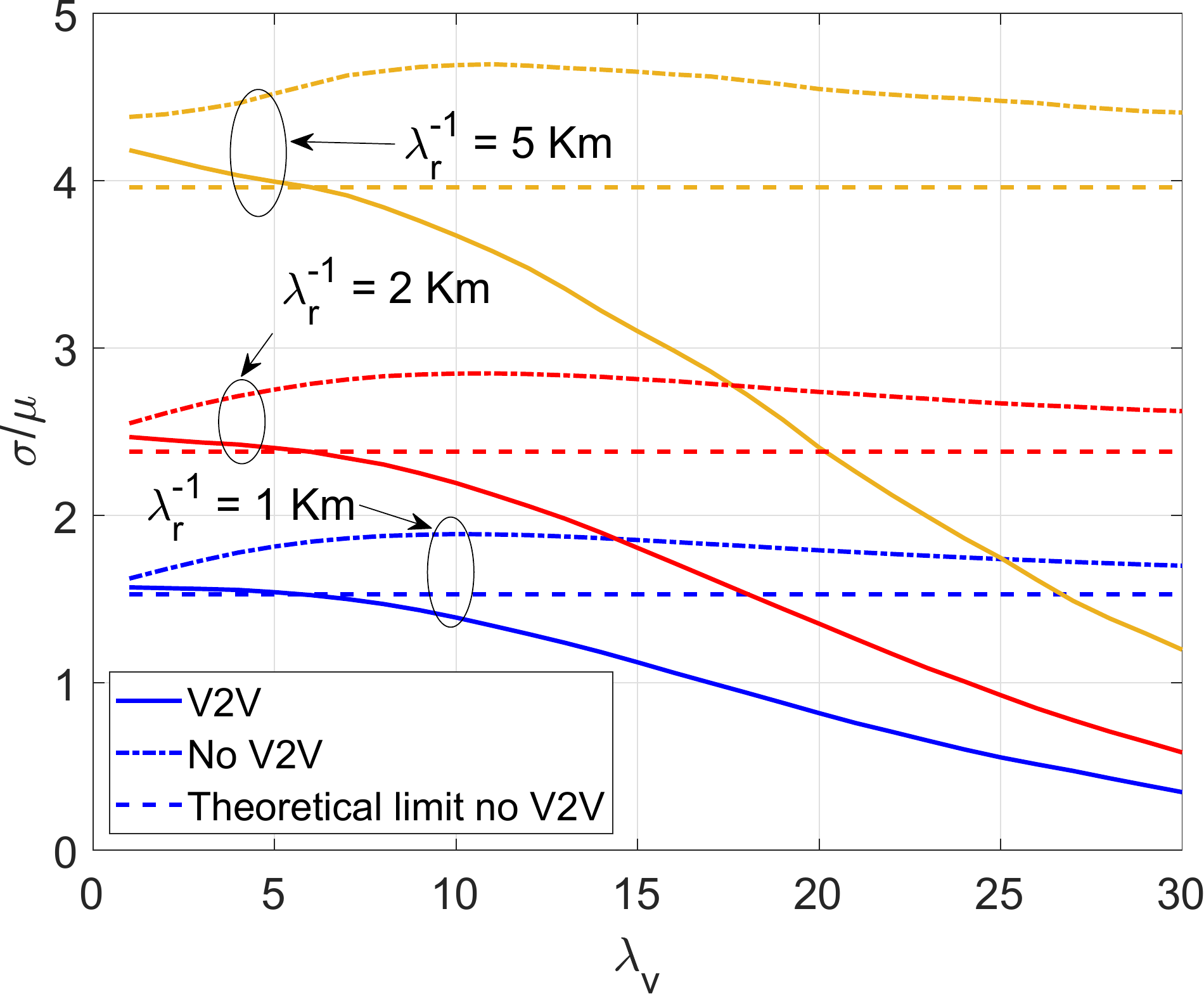} 
\caption{Dispersion (standard deviation over the mean) of the vehicle shared rates under {\color{black}V2V+V2I and V2I} only scenarios, and different inter-RSU distances ($d=150$m, $\gamma=1$).}
\label{fig:results_var}
\vspace{-0.15cm}
\end{figure}

The benefit of the redundancy is also reflected in Figure \ref{fig:results_ge2} which exhibits the probability that a typical vehicle benefits from multihoming as the vehicle intensity increases. This probability reaches values very close to 1 under heavy and congested traffic conditions, for the given values of $\lambda_r^{-1}$, providing evidence of the potential for higher reliability through multihoming.

\begin{figure}[h]
        \centering
        \includegraphics[width=0.95\columnwidth]{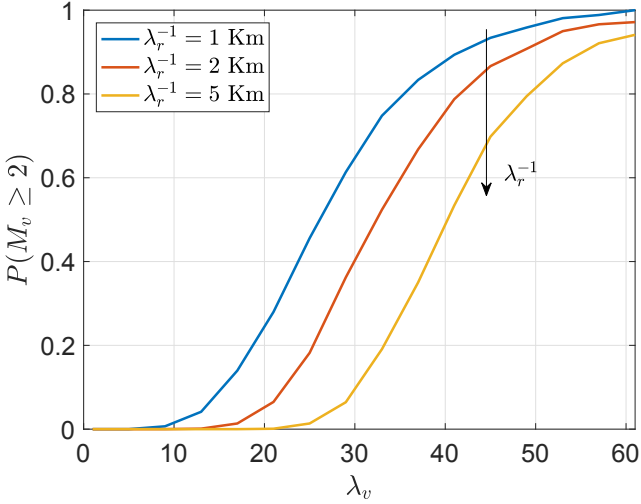}
        \caption{Redundancy: Probability for a typical vehicle cluster to be connected to 2 or more RSUs.}
        \label{fig:results_ge2}
\end{figure}

\section{Extension to Multilane Highways}
\label{sec:multilane}
\begin{figure*}[t!]
\centering
\includegraphics[width=0.65\textwidth]{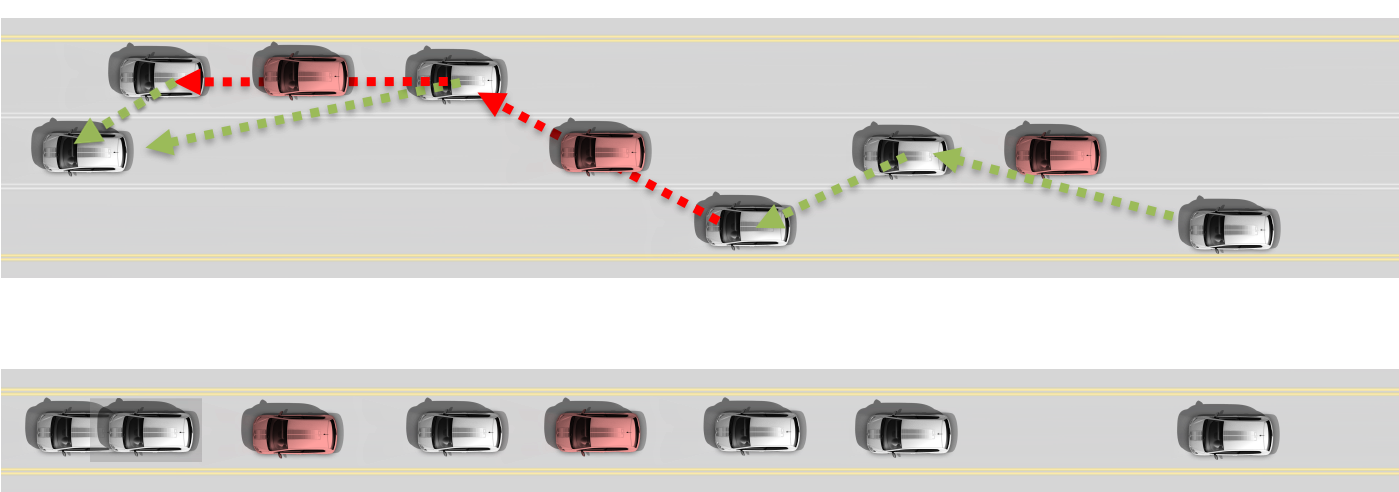}
\caption{Example of the multi-lane highway approximation construction. The bottom system is the construction proposed based on the rules in Definition \ref{block-model}.}
\label{fig:example_Multihighway}
\end{figure*}

The system described  in Section \ref{sec:model} and analyzed in Section \ref{sec-perf_analysis}  considers a single lane highway. In this section we consider multilane highways. Because an exact analysis is somewhat intricate we shall explore how one can relate the performance of multilane highways to the single lane setting.

\begin{definition}(\textbf{Multilane highway})\label{def:highway}
We define a multilane highway as a triplet: $\left(\eta, \boldsymbol{{\lambda^{\text{V2V}}}}, \boldsymbol{{\lambda}}^{b}\right),$
where $\eta$ is the number of lanes, which are indexed sequentially $1,2,\ldots,\eta$ and $$\boldsymbol{{\lambda^{\text{V2V}}}}\triangleq(\lambda^{\text{V2V}}_k: k=1,2,\ldots,\eta), \qquad\lambda^{\text{V2V}}\triangleq\sum_{k=1}^{\eta} \lambda^{\text{V2V}}_k$$ and $$\boldsymbol{{\lambda}}^{b}\triangleq(\lambda^{b}_k: k=1,2,\ldots,\eta),\qquad\lambda^{b}\triangleq\sum_{k=1}^{\eta} \lambda^{b}_k$$ correspond to the intensities of V2V capable and blocking legacy vehicles in each lane, respectively. We assume each lane has independent PPPs of vehicles, and distances among lanes are negligible as compared to the communication range~$d$. 
\end{definition}

\begin{definition}(\textbf{Multilane blocking model})\label{block-model}
In our multilane highway, LoS blocking is modeled as follows. Consider a triplet $(\mathrm{k}^-,\mathrm{k}^b,\mathrm{k}^+)$ as the lane index of the transmitter, of a potential blocker and the receiver, respectively. A blocker \textbf{may} obstruct the LoS link from $\mathrm{k}^{-}$ to $\mathrm{k}^{+}$ if and only if it is located in a lane between the transmitter and receiver, i.e.,
$$\mathrm{k}^-=\mathrm{k}^b=\mathrm{k}^+\quad\text{or}\quad \mathrm{k}^-<\mathrm{k}^b<\mathrm{k}^+\quad\text{or}\quad \mathrm{k}^->\mathrm{k}^b>\mathrm{k}^+.$$
From this definition, it follows that the worst case number of lanes where vehicles might be located and \textbf{might} block a LoS link is $k^*=\max(1,\eta-2).$ 
\end{definition}

For a typical vehicle in a multilane highway $\mathcal{M}$, we define the number of vehicles, length and number of multihomed RSUs to its cluster as $(N_v^\mathcal{M}, L_v^\mathcal{M}, M_v^\mathcal{M})$ for the multi-lane highway and $(N_v^\mathcal{S}, L_v^\mathcal{S}, M_v^\mathcal{S})$ for a single lane road $\mathcal{S}$. We will also define  $(\pi_v^\mathcal{M}, R_v^\mathcal{M})$ and $(\pi_v^\mathcal{S}, R_v^\mathcal{S})$ as the coverage probability and shared rate of a typical vehicle in multi and single lane highways.

\begin{theorem}\label{thm:multilane2}
For a given multilane highway 
$\mathcal{M}=(\eta, \boldsymbol{{\lambda^{\text{V2V}}}}, ~\boldsymbol{{\lambda^{b}}})$
let $\mathcal{S}=(1, ~\gamma\lambda,\lambda^b_{\text{eff}})$
be an associated single lane highway system
where:
$${{\lambda}}=\lambda^{\text{V2V}}+\lambda^{b}; \ \  \gamma=\frac{\lambda^{\text{V2V}}}{{\lambda}}  \ \ 
\text{ and } \ \  \lambda^b_{\text{eff}} = \max(\lambda^{b}_0, \lambda^{b}_k, \sum^{\eta-1}_{i=2}\lambda^{b}_{i}).$$
Then, it follows that \footnote{The definitions of $\le^{st}$ and $\le^{icx}$ dominance can be found in the appendix.}:
$$N_v^\mathcal{M}\ge^{st}N_v^\mathcal{S},\quad L_v^\mathcal{M}\ge^{st}L_v^\mathcal{S}\quad \text{and}\quad M_v^\mathcal{M}\ge^{st}M_v^\mathcal{S}$$
and
$$\pi_{v}^\mathcal{M}\ge \pi_{v}^\mathcal{S},\quad R_v^\mathcal{M}\le^{icx}R_v^\mathcal{S}.$$
In other words, the multilane highway has larger cluster statistics, better coverage and decreased variability relative to the associated single lane highway.
\end{theorem}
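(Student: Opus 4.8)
The plan is to place the multilane system $\mathcal{M}$ and the associated single‑lane system $\mathcal{S}=(1,\gamma\lambda,\lambda^b_{\text{eff}})$ on a common probability space in which the cluster of the typical vehicle in $\mathcal{M}$ always \emph{contains} its cluster in $\mathcal{S}$; every assertion then follows from this single coupling. Since inter‑lane distances are negligible next to $d$, I first project all vehicles of $\mathcal{M}$ onto a line, keeping each vehicle's lane index as a mark. By superposition of the independent per‑lane PPPs, the projected V2V‑capable vehicles form a PPP of intensity $\lambda^{\text{V2V}}=\gamma\lambda$, which is exactly the V2V intensity of $\mathcal{S}$, so I use the \emph{same} realization of V2V points (and the same Palm point at the origin for the typical vehicle) in both systems. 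Two V2V vehicles consecutive in projected order lie in the same cluster iff their spacing is at most $d$ and no legacy blocker in a lane (weakly) between their two lanes, in the sense of Definition~\ref{block-model}, falls in the gap between them. The key combinatorial observation is that, \emph{whatever} the two lane marks of a given inter‑V2V gap, the blockers that can obstruct that gap form a PPP of intensity at most $\lambda^b_{\text{eff}}$: if the endpoints share a lane $k$ the relevant intensity is $\lambda^b_k$, and otherwise it is a sum over strictly interior lanes, the largest being $\sum_{i=2}^{\eta-1}\lambda^b_i$ — precisely the quantity in the statement. Hence, conditionally on the V2V points and their marks, I can realize $\mathcal{S}$'s blocker process (homogeneous PPP of intensity $\lambda^b_{\text{eff}}$) gap by gap as the superposition of $\mathcal{M}$'s obstructing blockers and an independent residual PPP. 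Under this coupling any gap broken in $\mathcal{M}$ (by spacing exceeding $d$, or by an obstructing blocker) is broken in $\mathcal{S}$ as well, so the $\mathcal{S}$‑clusters refine the $\mathcal{M}$‑clusters and in particular $C_v^{\mathcal{S}}\subseteq C_v^{\mathcal{M}}$.

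From $C_v^{\mathcal{S}}\subseteq C_v^{\mathcal{M}}$ the cluster‑statistic orderings are immediate: $N_v^{\mathcal{S}}\le N_v^{\mathcal{M}}$ and $L_v^{\mathcal{S}}\le L_v^{\mathcal{M}}$ almost surely (the length being the span of the members plus $2d$), and since a cluster is connected to exactly those RSUs within range $d$ of some member and the RSU grid is shared, a larger member set reaches at least as many RSUs, so $M_v^{\mathcal{S}}\le M_v^{\mathcal{M}}$ a.s.\ as well (one may alternatively invoke the monotonicity of $F_{M\mid L}(\cdot\mid l)$ in $l$ from Lemma~\ref{lem:mdi}). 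These almost‑sure inequalities upgrade to $N_v^{\mathcal{M}}\ge^{st}N_v^{\mathcal{S}}$, $L_v^{\mathcal{M}}\ge^{st}L_v^{\mathcal{S}}$, $M_v^{\mathcal{M}}\ge^{st}M_v^{\mathcal{S}}$. Coverage then follows at once: the typical vehicle is covered iff $M_v\ge 1$, hence $\pi_v^{\mathcal{M}}=\mathbb{P}(M_v^{\mathcal{M}}\ge 1)\ge\mathbb{P}(M_v^{\mathcal{S}}\ge 1)=\pi_v^{\mathcal{S}}$.

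For the rate I would proceed in two steps. Both systems carry V2V vehicles of total intensity $\gamma\lambda$ and share the RSU grid, so by the reasoning behind Theorem~\ref{thm:throughput} — the set of busy RSUs, and hence the aggregate downlink rate per vehicle, is insensitive to the V2V connectivity — we get $\mathbb{E}[R_v^{\mathcal{M}}]=\mathbb{E}[R_v^{\mathcal{S}}]$. For the ordering, I use that under Assumption~\ref{assum:2} and max‑min fair multihomed sharing the typical vehicle's rate is $\rho^{\text{RSU}}M_v/N_v$. In the coupling, $C_v^{\mathcal{M}}$ is a union of consecutive $\mathcal{S}$‑clusters $C_1,\dots,C_J$, one of which contains the typical vehicle; writing $n_j=|C_j|$ and $m_j$ for the RSUs reached by $C_j$, the typical vehicle's $\mathcal{S}$‑rate conditioned on this family equals $\rho^{\text{RSU}}m_j/n_j$ with probability $n_j/\sum_i n_i$, whose conditional mean $\rho^{\text{RSU}}(\sum_j m_j)/(\sum_j n_j)$ is exactly the pooled rate $R_v^{\mathcal{M}}$ of the merged cluster. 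Thus $R_v^{\mathcal{M}}=\mathbb{E}[R_v^{\mathcal{S}}\mid\text{merging pattern}]$, so conditional Jensen gives $R_v^{\mathcal{M}}\le^{cx}R_v^{\mathcal{S}}$, hence $R_v^{\mathcal{M}}\le^{icx}R_v^{\mathcal{S}}$, hence (Definition~\ref{def:dominance}) $\mathrm{Var}(R_v^{\mathcal{M}})\le\mathrm{Var}(R_v^{\mathcal{S}})$.

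The step I expect to be the main obstacle is the blocker coupling: because ``which blockers obstruct'' depends on the lane marks of the two \emph{neighbouring} V2V vehicles, the obstructing blockers are not a globally homogeneous thinning of the blocker PPP but only a gap‑wise one, so the bound by $\lambda^b_{\text{eff}}$ and the superposition must be argued conditionally on the whole V2V configuration and then integrated, with consistency checked across all gaps simultaneously (and at the two half‑open end segments of length $d$). A secondary point requiring care is the identity $R_v=\rho^{\text{RSU}}M_v/N_v$: an RSU at a seam between adjacent clusters can be shared, which is why the single‑lane results Theorems~\ref{thm:throughput}--\ref{thm:cdf_rate} are stated with inequalities; Assumption~\ref{assum:2} keeps this effect controlled, and it only reinforces the pooling inequality above.
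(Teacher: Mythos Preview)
Your coupling construction is exactly the paper's: project the V2V vehicles (same PPP of intensity $\gamma\lambda$ in both systems), and in each inter--V2V gap augment the obstructing blockers of $\mathcal{M}$ by an independent residual PPP up to intensity $\lambda^b_{\text{eff}}$; the paper makes the same gap--wise construction and draws the same conclusion that a broken gap in $\mathcal{M}$ is broken in $\mathcal{S}$, hence $\mathcal{S}$--clusters refine $\mathcal{M}$--clusters. Your derivation of the $st$--orderings for $N_v,L_v,M_v$ and of $\pi_v^{\mathcal{M}}\ge\pi_v^{\mathcal{S}}$ from this refinement is likewise the paper's argument (the paper phrases it through an ergodic average over clusters and sub--clusters rather than ``a.s.\ under the coupling,'' but the content is identical).

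Where you genuinely diverge is the rate ordering. You argue that, conditional on the refinement of $C_v^{\mathcal{M}}$ into $C_1,\dots,C_J$, $R_v^{\mathcal{M}}$ equals the size--weighted average of the sub--cluster rates and hence $R_v^{\mathcal{M}}=\mathbb{E}[R_v^{\mathcal{S}}\mid\text{pattern}]$, giving $\le^{cx}$ by conditional Jensen. The paper instead invokes the structural property of max--min fairness: the max--min allocation in $\mathcal{M}$ is majorized by \emph{every} feasible allocation in $\mathcal{M}$, and the $\mathcal{S}$--allocation is feasible in $\mathcal{M}$ because $\mathcal{M}$ has strictly more connectivity; majorization plus equal totals yields $R_v^{\mathcal{M}}\le^{icx}R_v^{\mathcal{S}}$ directly. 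The paper's route has the advantage that it never needs the identity $R_v=\rho^{\text{RSU}}M_v/N_v$, which (as you correctly note) is only an upper bound once adjacent clusters can share an RSU; your conditional--Jensen computation uses that identity on \emph{both} sides, so the seam effect does not obviously ``only reinforce'' the inequality and would need a separate argument. Conversely, your approach is more self--contained and makes the variance reduction transparent without appealing to the external majorization result.
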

A high level illustration of our approach is depicted in Figure \ref{fig:example_Multihighway} and a sketch of the proof is provided in the appendix. The single lane performance can in turn be obtained by using the result in the previous sections.

\section{Multilane Performance Evaluation}
\label{sec:multilane_perf_eval}
In this section, we further assess the performance of the proposed V2V+V2I network architecture via simulations. This will enable us to infer useful design and deployment strategies for the proposed collaborative technology in future vehicles and highways.
The communication range $d$ is set to be 150 meters and the inter-RSU distance $\lambda_r^{-1}$ is fixed at 1 Km, unless otherwise specified. Table~\ref{tab:typicalParameters} shows typical values for different parameters that were used in the simulations.  For our figures, we
have obtained $95\%$ confidence intervals achieving relative errors below
$2\%$ (not displayed).
In order to capture the effect of the blocking vehicles in the multilane system, we modeled vehicles as having a length of 5 meters allowing overlapping of vehicles resulting from the Poisson assumption on their location distribution.

\begin{table}[h]
\begin{tabular}{@{}ccccccc@{}}\toprule
\phantom{a}& \multicolumn{2}{c}{$d$ } & \phantom{ade}& \multicolumn{2}{c}{$\lambda_v$ }& \phantom{a} \\
\cmidrule{2-3} \cmidrule{5-6} 
& mmWave &VLC&  &Free-flow &Congestion&\\ 
& $75-200$m &$\approx100$m & &$\le 25$ veh./km  &$\ge 60$ veh./km&\\
\bottomrule\\\vspace{-0.8cm}
\end{tabular}
\caption{Typical parameter ranges in \cite{WBM07,NLJ15,PFH15}}
\vspace{-0.6cm}
\label{tab:typicalParameters}
\end{table}

\subsection{Homogeneous Multilane Highways}

\begin{figure}[h]
\centering
\includegraphics[width=\columnwidth]{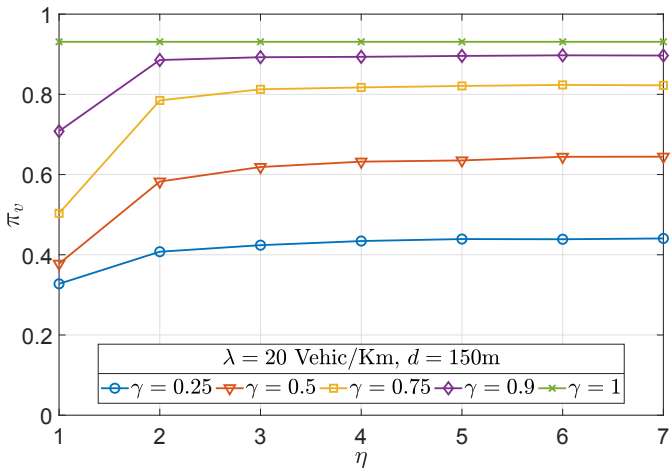}
\caption{Typical vehicle's coverage probability analysis as the driving ``degrees of freedom", i.e. $\eta$ increases, for different penetration rates.}
\label{fig:DoF}
\vspace{-0.4cm}
\end{figure}

\begin{figure*}[h]
        \centering
        \begin{subfigure}[h]{0.5\textwidth}
        \centering
  			  {{\includegraphics[width=0.97\textwidth]{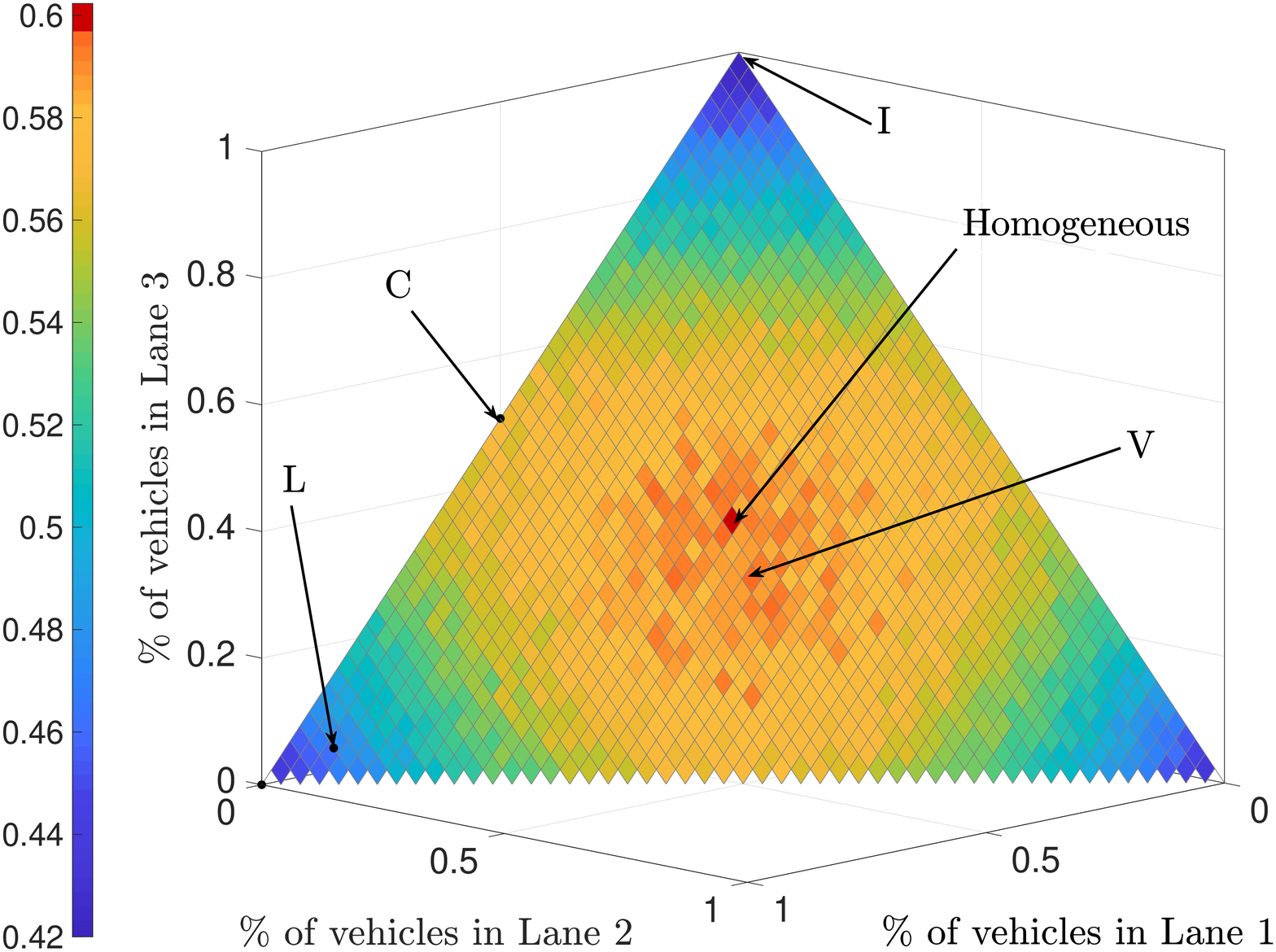}~\\~\\ }}%
                \caption{Coverage probability for $\eta=3$ lanes.}
        \end{subfigure}%
        \hfill
        \begin{subfigure}[h]{0.45\textwidth}
        \centering
   			   {{\includegraphics[width=1\textwidth]{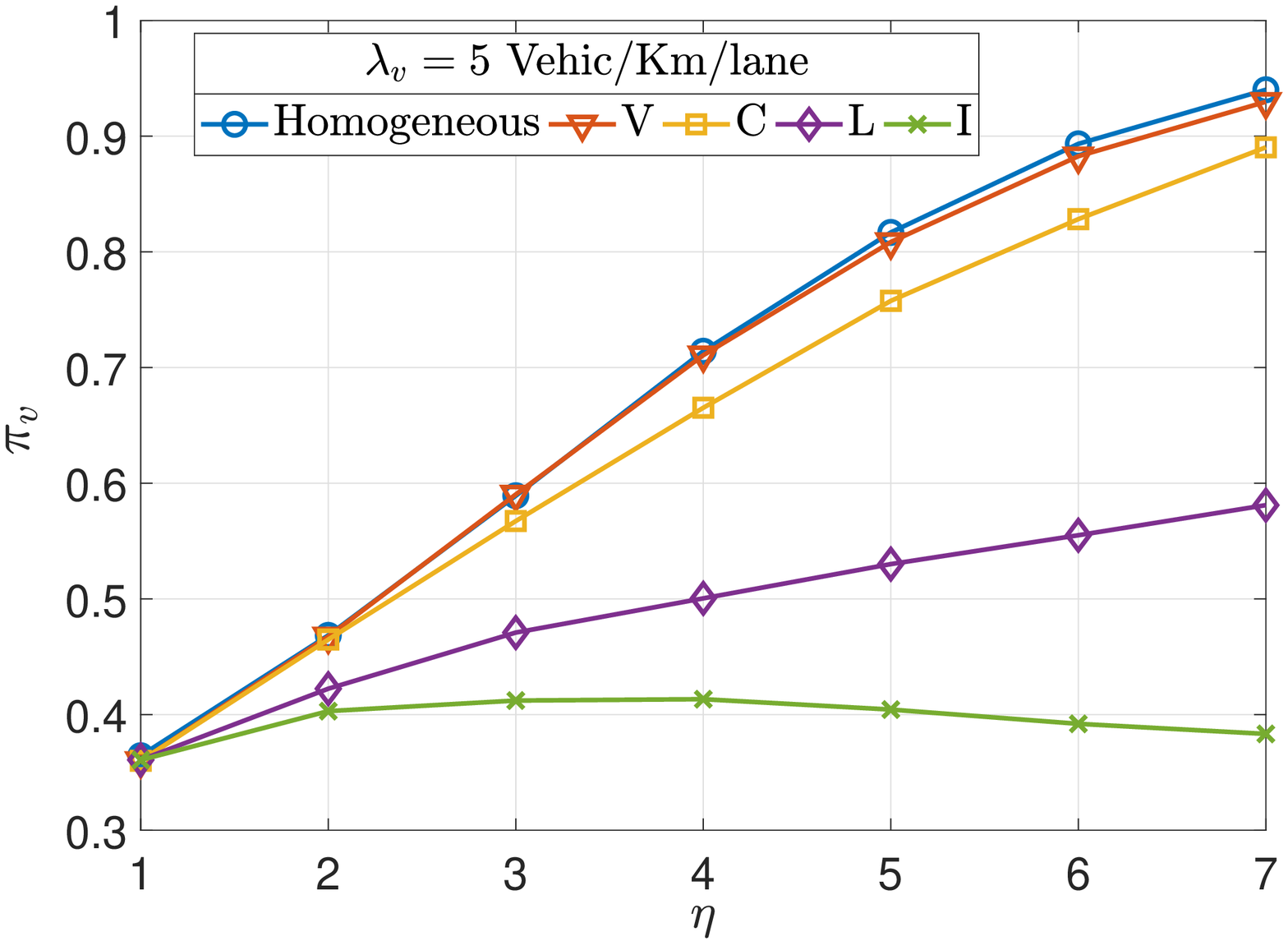}~\\~\\  }}%
                \caption{Coverage probability for different configurations.}
        \end{subfigure}%
    \caption{Multilane configuration coverage probability analysis for $\gamma=0.8, d=150$m,  $\lambda_r^{-1}=1$Km.}%
    \label{fig:configurations}%
    \vspace{-0.5cm}
\end{figure*}

Figure~\ref{fig:DoF} illustrates the variation in the coverage probability $\pi_v$ as $\eta$  increases, but the overall traffic intensity on the highway ($\lambda_v=20$ vehicles/Km) remains unaltered. This can be interpreted as the effect of increasing the vehicles' ``degrees of freedom"  to overcome  blocking by legacy vehicles.

A first observation is that the marginal gain in performance is most considerable when increasing the number of lanes from 1 to 2, while further increments in the number of lanes result in smaller relative gains. An explanation of this effect is that vehicles in the V2V+V2I network will see on average twice fewer blockers when passing from $\eta=1$ to $2$; while the relative decrease in the average number of blockers is smaller for higher values of $\eta$. Note that increasing the ``degrees of freedom" does not affect the performance of the system under full-market penetration as the same clusters will be formed for any value of $\eta$. From this result, one can infer that, as long as it is greater or equal than 2, the number of lanes of a highway,  will not substantially affect the connectivity probability.

\subsection{Heterogeneous Multilane Highways}
Next, we further explore the impact of heterogeneous traffic intensity across lanes on the coverage probability $\pi_v$. Note that such heterogeneity is typical in highways nowadays in a free-flow regime, since for instance a greater density of slower vehicles is seen in the right hand lanes. 
Figure~\ref{fig:configurations}(a) exhibits the effect of the vehicle distribution on a three-lane highway. 
In this figure, each coordinate represents the proportion of vehicles driving on each lane, therefore all possible configurations lie on the simplex. We observe that the homogeneous configuration has the best performance as it offers the best balance between minimizing the effect of blockers on the same and across lanes. The results show that performance deteriorates slowly when moving away from the homogeneous configuration, only experiencing notable decreases when moving to extreme distributions, e.g., all users are concentrated on one lane. In order to extrapolate these results to greater values of $\eta$ we define five different types of heterogeneous lane intensity distributions:
\begin{itemize}
    \item Homogeneous: all lanes have equal vehicle intensities, e.g. for $\eta=5$, $\boldsymbol{\lambda} = \lambda_v \eta \cdot [\frac{1}{5}, \frac{1}{5}, \frac{1}{5}, \frac{1}{5}, \frac{1}{5}]$.
    \item V: traffic is symmetrically and gradually concentrated around the leftmost and rightmost lanes of the highway, such that the intensity is minimized in the middle and maximized in the first and last lanes, e.g. for $\eta=5$, $\boldsymbol{\lambda} = \lambda_v \eta \cdot [\frac{1}{3}, \frac{2}{15}, \frac{1}{15}, \frac{2}{15}, \frac{1}{3}]$.
    \item C: traffic is restricted to two lanes with identical intensities while $\eta-2$ lanes are empty, e.g. for $\eta=5$, $\boldsymbol{\lambda} = \lambda_v \eta \cdot [\frac{1}{2}, 0, 0, 0, \frac{1}{2}]$.
    \item I: traffic is restricted to one lane with $\eta-1$ lanes empty, e.g. for $\eta=5$, $\boldsymbol{\lambda} = \lambda_v \eta \cdot [1, 0, 0, 0, 0]$.
    \item L: $90\%$ of traffic is in the first lane while the other $10\%$ is evenly distributed across the $\eta-1$ remaining lanes, e.g. for $\eta=5$, $\boldsymbol{\lambda} = \lambda_v \eta \cdot [\frac{9}{10}, \frac{1}{40}, \frac{1}{40}, \frac{1}{40}, \frac{1}{40}]$.
\end{itemize}
Figure~\ref{fig:configurations}(b) confirms the trends exhibited in Figure~\ref{fig:configurations}(a) as the number of lanes of the highway  increases. The homogeneous distribution remains best as compared to the V, C, L and I configurations.

We note that unlike in Figure~\ref{fig:DoF}, the total number of vehicles increases with $\eta$ in the highway system.

An interesting insight which can be inferred from these results is the idea that congested highways (large $\lambda_v$) may have a better connectivity performance than free-flowing systems, as the intensity distribution is typically uniform across all the lanes in such cases.

\subsection{V2V Segregation Impact}
While manufacturers progressively release new vehicle models equipped with the V2V+V2I technology, we envision a transition period during which the roads will be shared among the new V2V-enabled and older legacy vehicles. 
In order to accelerate the integration and the spread of new automotive technologies, policies restricting specific lanes to driverless and V2V-enabled vehicles only might be put into place. This is akin to the current concept of high-occupancy vehicle lane. 
We analyze the effect on the coverage probability of reserving the first lane for V2V-enabled vehicles and we will define $\alpha$ as the percentage of V2V-enabled vehicles driving on this lane, i.e. the first lane has a vehicle intensity of $\alpha\gamma\lambda_v$ with only V2V-enabled vehicles while the others are mixed and uniformly distributed. Figure~\ref{fig:segregation} shows the effect of $\alpha$ on the network performance.
\begin{figure}[!t]
\vspace{0cm}
\centering
\includegraphics[width=0.95\columnwidth]{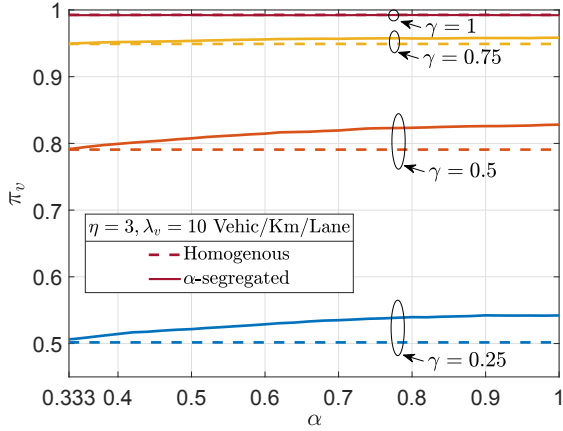}
\caption{Connectivity of $\alpha$-segregated scenario for different penetration rates.} \label{fig:segregation}
\vspace{-0.3cm}
\end{figure}
We observe that for $\alpha$ large enough, segregation does indeed improve coverage, particularly at low market penetration levels, implying that such a policy would lead to improved connectivity in the early stages of the V2V capable vehicles deployment. 

\section{Revisiting the Poisson Assumption}
\label{sec:poisson}
In this section, we revisit one of the main assumptions of our network model, namely the Poisson distribution for cars on the highway. We study the validity of this assumption through realistic highway system simulations, before discussing the impact of different configurations on the network performance. Recall that as discussed in Section~\ref{sec:model}, this assumption was validated in part for the free-flow setting in \cite{WBM07,GSH11}.

\subsection{Validity of the Poisson Assumption}
We first explored the degree to which the PPP assumption might hold for a detailed simulation of vehicles on the road. The system-level simulator used is an enhanced version of the open-source automotive Intersection Management (AIM4) simulator \cite{Sto18}, that captures several features of real traffic patterns such as the vehicle dimensions, vehicle types, vehicle velocity, as well as realistic car-following and car-overtaking models.

Figure~\ref{fig:Poisson_confirmation} shows the distribution for the inter-vehicle distances obtained in the simulator. The simulated traffic leads indeed to a configuration where the inter-vehicular distance is exponentially distributed, characterizing a PPP. The cumulative distribution function of the uniform distribution is also shown for comparison. This property holds for $\lambda_v = 14$ vehicles/km/lane, but can be generalized for any $\lambda_v$ small enough to remain in a free-flow regime, as well as any other number of lanes in the highway. Note that the results shown in Figure~\ref{fig:Poisson_confirmation} correspond to the inter-arrival distances for the projection of cars in the three lanes onto the given axis, hence although vehicles cannot be closer than their dimension permits on a given lane in the simulator, the projection of the vehicles' centers on the three lanes can be arbitrarily close.

\begin{figure}[h]
\vspace{0.0cm}
\centering
\includegraphics[trim={1.5cm 0 1.5cm 0},clip, width=0.9\columnwidth]{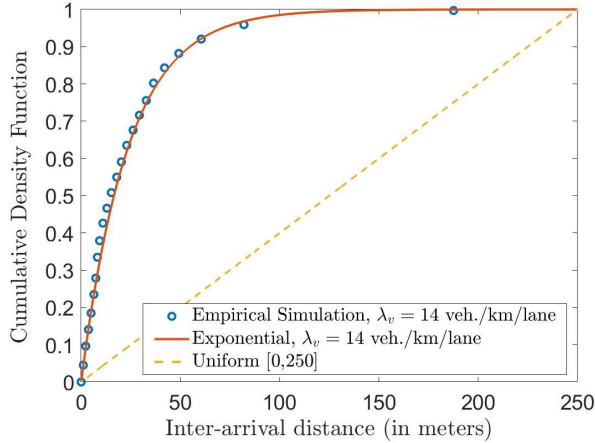}
\caption{Comparison of the simulated vehicle inter-arrival distance CDF with exponential and uniform random variables, on a collapsed 3 lanes highway system ($\eta = 3$).} \label{fig:Poisson_confirmation}
\end{figure}

Therefore, we expect that the observations and conclusions drawn from Figures~\ref{fig:results1}-\ref{fig:results_ge2} in the single lane scenario to apply in the multilane configuration as well. Moreover, our analysis in Sections~\ref{sec:multilane} and \ref{sec:multilane_perf_eval} predicts improved performances compared to the single lane case. For instance, we expect a higher probability of connectivity, better redundancy, or improved per-user shared rate for instance, due to the fact that clusters can be larger in size and that blocking vehicles have a less severe impact on the others.

\subsection{Insight on Alternative Distributions}
Although the PPP assumption will be a good fit in certain regimes, it will still fail for others that may arise in the future, e.g., where cars may intentionally form platoons to increase highway throughput. To better understand how such patterns might affect connectivity, in this section we ask the question ``What is the best possible configuration of cars, i.e., resulting in the best connectivity metrics?''. We shall focus on two performance metrics: coverage $\pi_v$ and mean rate per user. 
Two regimes can be distinguished. The first one corresponds to situations where $\lambda_v \geq 1/d$, i.e. where the vehicle density is large enough so that vehicles can be separated by $1/d$ meters. In such a scenario, vehicles would form a single infinite cluster leading to $\pi_v = 1$ and maximum mean rate per user since all the RSUs are in use. The other regime of interest is where $\lambda_v < 1/d$. Consider first a configuration where all the clusters in the network are of same size. Then spacing the vehicles by $d$ within the cluster would ensure maximal cluster length, and hence maximal $\pi_v$ and $\mathbb{E}[R_v]$ as this would maximize the ``space covered" by clusters and thus the RSU busy time. Similarly, spacing vehicles in adjacent clusters by $2d$ would also maximize $\mathbb{E}[R_v]$, without affecting the coverage.
Following these two rules, we derive expressions for $\pi_v$ and $u$, the average RSU utilization capturing the same information as $\mathbb{E}[R_v]$. For a fixed cluster size $n$:

\begin{equation}
\pi_v(n) = \min [(n+1)\cdot d \cdot \lambda_r, 1] 
\end{equation}
\begin{equation}
u(n) = \min [\frac{n+1}{n}\cdot d \cdot \lambda_v, 1]
\end{equation}

Clearly, as $n$ increases, $\pi_v(n)$ increases while $u(n)$ decreases. We exhibit that trend through a tradeoff curve between coverage and throughput as a function of $n$ in Figure \ref{fig:tradeoff_curves}:

\begin{figure}[h]
\vspace{0.0cm}
\centering
\includegraphics[width=1\columnwidth]{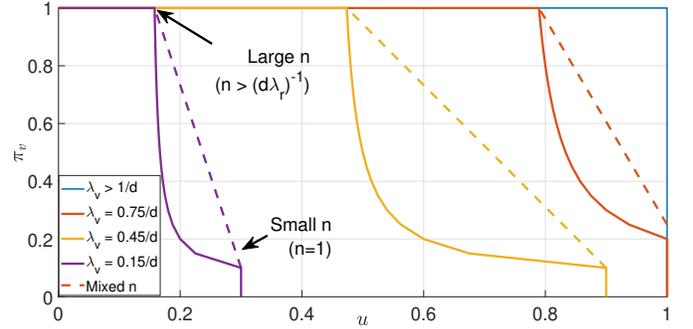}
\caption{Tradeoff curve between connectivity $\pi_v$ and RSU utilization $u$, for different $\lambda_v$ (in vehicles/km), and the achievable performance by mixing cluster sizes.} \label{fig:tradeoff_curves}
\end{figure}

Figure~\ref{fig:tradeoff_curves} exhibits the tradeoff between connectivity and throughput. In a low density regime, vehicles form longer clusters but cover less area as the cluster size $n$ increases, improving the connectivity but reducing the average RSU utilization, and hence the mean rate per user. We note that when the vehicle density $\lambda_v$ is large enough, the tradeoff does not occur as vehicles can get full connectivity and maximum mean rate per user.
In scenarios where cluster size mixing is allowed, cluster can see an even better performance, represented by a straight line between any two points on the tradeoff curves. We note that the best mixing possible is combinations of clusters of size 1, i.e. isolated vehicles, and clusters of size $n = \mathlarger{\mathlarger{\mathlarger{\lfloor}}} \frac{1}{d\lambda_r}+1 \mathlarger{\mathlarger{\mathlarger{\rfloor}}}$. The tradeoff curves associated with such mixings are drawn as dashed lines on Figure~\ref{fig:tradeoff_curves}. Intuitively, clusters of size 1 help to maximize the total area covered by the clusters, while the largest clusters increase the connectivity probability of a typical vehicle. Different combinations of those two cluster sizes can be constituted to reach any specific connectivity or throughput target.

\section{Conclusion}
\label{sec:conclusion}
In this work we have analyzed the performance of a multi-homed V2V+V2I architecture. Our main conclusion is that V2V relay clusters along with RSU multi-homing improves significantly the typical vehicle coverage probability and reliability, while reducing the variability of the shared rate per user when compared to a traditional V2I architecture. These properties position this architecture as a critical enabler for Internet connectivity services in future vehicular networks. We also conclude that the V2V technology penetration level is critical in the system performance given that many legacy vehicles will obstruct the LoS and prevent some vehicles to communicate. These difficulties may be mitigated if dedicated lanes are used by new vehicles that are V2V+V2I capable, particularly at low penetration levels.
Moreover, we proposed a new mechanism to bound the performance of multi-lane highways by equivalent single lane highways, and our simulation results highlight a robustness of performance to heterogeneous vehicle distributions across lanes. Finally, we described how the results presented throughout the paper would change if one could control the relative positions of the vehicles on the road, e.g., when autonomous vehicles form platoons, and how the connectivity-throughput tradeoff can be formally characterized in such scenarios.

\bibliographystyle{IEEEtran}
\bibliography{bare_jrnl}

\appendix
\begin{center}
\small
DEFINITIONS AND PROOFS
\end{center}

\begin{definition}\textbf{(Stochastic Dominance)}\label{def:dominance}
As in \cite{MuS02}, we define stochastic dominance as 
$$X\le^{st}Y \implies P(X>x)\le P(Y>x), ~ \forall x$$ 
and increasing convex dominance $$X\le^{icx}Y \implies \mathbb{E}[f(X)]\le \mathbb{E}[f(Y)], ~ \forall f\in \mathcal{F},$$
where $\mathcal{F}$ is the set of increasingly convex functions for which the expected value is defined.
Note further that if $X\le^{icx}Y$ and $\mathbb{E}[X]=\mathbb{E}[Y]$ then $X\le^{cx}Y$ for convex functions, e.g., 
$\mbox{Var}(X)\le \mbox{Var}(Y).$
\end{definition}

{\bf{Proof. \textit{(Lemma \ref{lem:mdi}})}} 
We denote as $\varphi$ the probability of not having any V2V capable vehicle in the communication range $d$ ahead of a typical vehicle.
From the Poisson assumption, the distance between vehicles follows an exponential distribution denoted by a random variable $E\sim\text{exp}(\lambda_v)$. The probability of having one or more vehicles within the communication range $d$ of a participant is then: $F_{E}(d) = 1-e^{-\lambda_v d}$.
Since the market penetration is considered independent of the interarrival time, the probability of the next car being a V2V+V2I capable vehicle within the communication range $d$ is given by $\gamma(1-e^{-\lambda_v d})$, thus $\varphi=1-\gamma (1-e^{-\lambda_v d})$.

Now since the number of users in a cluster is determined by the number of the successive V2V capable vehicles in range of each other, $p_{N}(n)=\varphi\left(1-\varphi\right)^{n-1},$
i.e., $N$ is a geometric random variable with parameter $\varphi$, and mean $\mathbb{E}[N]={1}/{\varphi}$.

From the analysis in \cite{WBM07}, it is well known that the average cluster communication range is $ \mathbb{E}[L]={\lambda_v}^{-1}\cdot\left(e^{\lambda_v d}-\lambda_v d-1\right)$ (defined as distance between the first and the last vehicle plus 2 times the communication range). However, the {density function} of the length has been only evaluated via simulations \cite{Rei14}.
The length of the cluster, given that there are $N$ vehicles, corresponds to $L=2d+\sum_{i=1}^{N-1} T_i,$ where $T_i$ denotes the inter-spacing of \textit{V2V} capable vehicles in the same cluster. Note that the distribution of $T_i$ is that of an exponential conditioned on being smaller than $d$, thus
$$f_{T_i}(l)=\frac{\lambda e^{-\lambda_v l}}{1-e^{-\lambda_v d}}, \quad 0<l\le d.$$
The moment generating function for $T_i$ is thus
$$M_{T_i}(s)=\int_{0}^{d} e^{sl} \frac{\lambda e^{-\lambda_v l}}{1-e^{-\lambda_v d}}dl= \frac{\lambda_v e^{d (s-\lambda_v)}-\lambda_v}{(s-\lambda_v)(1-e^{-\lambda_v d})}$$
and consequently the conditional moment generating function of the length of a cluster, given its number of vehicles, denoted as $M_{L\mid N=n}(s)$ is given by:
\begin{align*}
M_{L\mid N=n}(s)&=e^{2 s d}\prod_{i=1}^{n-1}M_{T_i}(s) \\
& =e^{2 s d}\left[\frac{\lambda_v e^{d (s-\lambda_v)}-\lambda_v}{(s-\lambda_v)(1-e^{-\lambda_v d})}\right]^{n-1}.
\label{eq:LgivenN}
\end{align*}

Given the conditional distribution, we can compute the moment generating function of $L$ via:
\vspace{-0.2cm}
\begin{align*}
M_{L}(s)&=\sum\limits_{n=1}^{\infty} M_{L\mid N=n}(s) p_{N}(n)
&=\frac{e^{ 2sd} \varphi}{1-M_T(s)+\varphi M_T(s)}.
\end{align*}

For the case of full market penetration, this simplifies to:
\begin{align*}
M_{L}(s)
=\frac{e^{d (2s-\lambda_v)} \left(s-\lambda_v\right)}{s-\lambda_v e^{d (s-\lambda_v)}}.
\end{align*}

The distributions ${f}_{L}(l)$ and ${f}_{L\mid N}(l)$ can be then obtained by the inverse Laplace transform of $M_{L}(-s)$ and $M_{L\mid N=n}(-s)$, respectively. The conditional CDF of the number of RSUs $M$ serving a cluster of length $L$  is given by
\begin{align*}
F_{M\mid L}(m\mid l)
&=\begin{cases}
1 & \mbox{if } m\lambda_r^{-1}<l,\\
1-\frac{l}{m\cdot \lambda_r^{-1}}   & \mbox{if } (m-1) \lambda_r^{-1}< l \le m\cdot \lambda_r^{-1},\\
0 &  \mbox{otherwise} 
\end{cases}
\end{align*}
since the cluster process is stationary on the line and independent of the RSU locations. Given $N$ it is direct to see, applying the chain rule that:
\begin{align*}
F^c_{M \mid N} \left(m\mid n\right)=\int\limits_{0}^{\infty}  F^c_{M\mid L}(m\mid l) f_{L\mid N}(l\mid n)dl.
\end{align*}
where $F^c(\cdot)$ stands for the complementary CDF.
Substituting the distributions obtained above, the simplified expression is given by \eqref{eq:M_given_N}.
\qed\\

{\bf{Proof. \textit{(Lemma \ref{lem:coverage}})}} 
In order to derive an expression for the probability of coverage of a typical vehicle $\pi_v;$ we will first relate the number of vehicles as seen in a typical cluster $N$ to that seen by a typical vehicle in its cluster $N_v:$
$$p_{N_v}(n)=\frac{n P(N=n)}{\mathbb{E}[N]},$$
represents the probability for a typical vehicle to be in a cluster of size $n$, where the $\frac{n}{\mathbb{E}[N]}$ biases the distribution of $N$ as a typical vehicle is more likely to belong to larger clusters. Therefore
{\begin{align*}
\pi_v&=\sum\limits_{n=1}^{\infty} p_{N_v}(n)  F^c_{M\mid N}\left(0\mid n\right)\\&=\varphi^2 \sum\limits_{n=1}^{\infty} n (1-\varphi)^{n-1} F^c_{M\mid N}\left(0\mid n\right).
\end{align*}}
In the case with networks with only \textit{V2I} links enabled, the probability that a typical vehicle is connected corresponds to the probability that the vehicle lands in the fraction of the road covered by RSUs given by:
$\pi_v^{*}=\frac{2d}{\lambda_r^{-1}}.$\qed\\

{\bf{Proof. \textit{(Theorem \ref{thm:throughput}})}} 
Under our sharing model, vehicles in a typical cluster with $(N,M)$ users and RSU connections will see a shared rate no larger than $\frac{\rho^{RSU}M}{N}.$ This is exact if the cluster does not share any of the RSUs with another cluster; otherwise this is an upper bound. Note that an RSU can be shared by two clusters, each approaching from one side, and both not close enough to form one larger cluster.
The mean rate seen by a typical vehicle $R_v$ is then bounded by:
\begin{align*}
\mathbb{E}[R_v]&\le{\mathbb{E}\left[\frac{N}{\mathbb{E}[N]}\frac{\rho^{RSU}}{N} M\right]}
=\frac{\rho^{RSU}\cdot \mathbb{E}[M] }{\mathbb{E}[N]}
\end{align*}
where once again we have moved from the typical cluster shared rate to the typical vehicle shared rate by weighting by $N/\mathbb{E}[N].$

In the case of the V2I network, the rate seen by a typical vehicle is given by:
\begin{align*}
\mathbb{E}[R_v^{*}]&=\mathbb{E}\left[\frac{\rho^{RSU}}{N^*+1}\mid I^*_v \right] \pi_v^*=\rho^{RSU} \frac{2d}{\lambda_r^{-1}} \mathbb{E}\left[\frac{1}{N^*+1}\right]
\end{align*}
where $I^*_v$ denotes the event of probability $\pi_v^*$ that a vehicle is connected, and $N^*$ denotes the number of (other) vehicles that a typical connected vehicle would see sharing its RSU. Note that the distribution of $N^*$, i.e., the reduced Palm distribution of the Poisson, is equal to its original distribution (Poisson $(2d\gamma\lambda_v)$), given the Slivnyak's Theorem \cite{BacV1}.
Therefore $\mathbb{E}\left[\frac{1}{N^*+1}\right]=\sum\limits_{n=0}^{\infty} \frac{P(N^*=n)}{n+1}
=\frac{1-e^{-2\gamma\lambda_v d}}{2\gamma\lambda_v d}$
and $\mathbb{E}[R_v^{*}]=\rho^{RSU} \cdot \frac{1-e^{-2\gamma\lambda_v d}}{\lambda_v \gamma\lambda_r^{-1}}$.

Finally, by coupling the vehicle locations for the V2V+V2I network and V2I network without relaying it is easy to observe that the number of busy RSUs is the same, so the mean rate seen by a typical vehicle in this two settings is the same, i.e., $\mathbb{E}[R_v]=\mathbb{E}[R_v^{*}].$\qed\\

{\bf{Proof. \textit{(Theorem \ref{thm:cdf_rate}})}} 
Paralleling Theorem \ref{thm:throughput}, an upper bound on the complementary CDF of the shared rate a typical vehicle sees in the V2V+V2I architecture, for $r>0$ is given by:
{\begin{align*}
F^c_{R_v}(r)&=P(R_v>r)\le \mathbb{E}\left[\frac{N}{\mathbb{E}[N]} \mathbb{E}\left[\mathbf{1}\left(\frac{M \rho^{\text{RSU}}}{N}\ge r\right)\mid N\right]\right]\\
&=\sum\limits_{n=1}^{\infty} \frac{n}{\mathbb{E}[N]} \cdot p_N(n)\cdot  F^c_{M\mid N}\left(\Bigl\lceil\frac{r n}{\rho^{\text{RSU}}}\Bigr\rceil\mid N=n\right)\\
&= \varphi^2\sum\limits_{n=1}^{\infty}  n \cdot(1-\varphi)^{n-1}\cdot  F^c_{M\mid N}\left(\Bigl\lceil\frac{r n}{\rho^{\text{RSU}}}\Bigr\rceil\mid N=n\right)  ,
\end{align*}
}
where $F^c(\cdot)$ stands for the complementary CDF.
Therefore, Equation \eqref{eq:cdfR} holds
and  $P(R_v=0)=1-\pi_v.$
Similarly the complementary CDF for the shared rate for a typical vehicle in the V2I network, for $r>0$ is
{ \begin{align}
P(R_v^{*}>r)&= P\left(\frac{\rho^{RSU}}{N^*+1}>r\right) \frac{d}{\lambda_r^{-1}}\nonumber\\&=P\left(\frac{\rho^{RSU}-r}{r}>{N^*}\right) \frac{d}{\lambda_r^{-1}}\nonumber\\
&=\frac{2 d}{\lambda_r^{-1}}\sum_{i=0}^{\lfloor \frac{\rho^{RSU}-r}{r} \rfloor}\left(\frac{(2\gamma\lambda_v d)^i}{i!}e^{-2 \gamma\lambda_v  d} \right)\nonumber\nonumber\\&
=\frac{2 d}{\lambda_r^{-1}}\cdot Q\left( \frac{\rho^{RSU}-r}{r},2\gamma\lambda_v d \right)\nonumber
\end{align}}
where $Q(.)$ is the regularized gamma function. 
Consequently, Equation \eqref{eq:cdfR_noV2V} holds
and  $P(R_v^*=0)=1-\pi_v^*.$
In order to prove the increasing convex dominance relation, we can use a coupling argument. We generate a single lane highway instance. It is clear that, for this instance, the number of vehicles and the total rate out of the network is the same, but the clusters are bigger in the V2V+V2I system (since the V2I only system only have clusters of one vehicle). It is proven in \cite{RaB07} that a max-min fairness allocation achieves the lexicographically minimum vector, i.e., for a max-min share rate allocation $\hat{R}$ and any other shared rate allocation ${R}$ then $\hat{R}$ \textit{is majorized} by ${R}$ \cite{HLP29} and further implies $\hat{R}\le^{icx}{R}.$ The proof is then completed by noticing that the max-min shared rate allocation of the V2I system $R^{*}$ is a feasible rate allocation in the V2V+V2I system, so ${R}\le^{icx}{R^*}.$
\qed\\

{\bf{Proof. \textit{(Theorem \ref{thm:multilane2}})}}
The proof relies on constructing a coupling between a random process $\xi^{\mathcal{M}}$ denoting vehicle locations on a multilane highway  $\mathcal{M}$ and an auxiliary process $\xi^{\mathcal{S}}$ denoting their locations on a single lane highway  $\mathcal{S}.$ 

Let $(T_i,K_i)_{i\in\mathbb{N}}$ denote the sequence of locations of V2V+V2I capable vehicles on $\mathcal{M}$, where $T_i$ denotes the location of the $i^{\text{th}}$ vehicle and $K_i$ its associated lane. We define the aggregated V2V+V2I capable vehicle intensity on highway $\mathcal{M}$ as $\lambda^{\text{V2V}}$, and their intensity on lane $k$ as $\lambda^{\text{V2V}}_k$. Note that under our Poisson assumption $(T_i)_{i\in\mathbb{N}}$ is a $\text{PPP}(\lambda^{\text{V2V}})$ and $(K_i)_{i\in\mathbb{N}}$ is distributed as $$p_{K_i}(k)=\left(\frac{\lambda^{\text{V2V}}_k}{\lambda^{\text{V2V}}}: k=1,2,\ldots,\eta\right),   \forall i\in\mathbb{N}$$ \noindent since aggregation of independent PPPs is also a PPP.

The first step of the coupled single-lane highway $\mathcal{S}$ construction consists in including V2V+V2I capable vehicles at locations $(T_i)_{i\in\mathbb{N}}$ in the auxiliary process $\xi^{\mathcal{S}}.$ 

Let us now consider the blocking vehicles in the multilane highway $\mathcal{M}$. These vehicles also correspond to PPPs of intensity $\lambda^b_k$ on each lane $k$ and independent of $(T_i,K_i)_{i\in\mathbb{N}}.$ For a given realization  $(t_i,k_i)_{i\in\mathbb{N}}$ let $B^{k_i}(t_i,t_{i+1}]$ denote a set of locations of blocking vehicles on lane $k$ in the time interval $(t_i,t_{i+1}]$ in the multilane highway. Note that $B^{k_i}(t_i,t_{i+1}]$ for any $i$ are mutually independent by the definition of $\text{PPP}.$

We shall let $B(t_i,t_{i+1}]$ denote blocking vehicles' locations that the process $\mathcal{M}$ will share with $\mathcal{S}$. Specifically, according to the blocking model in Definition 
\ref{block-model} we let
$$\hspace{-0.6cm}B(t_i,t_{i+1}]=\begin{cases}
\bigcup\limits_{j=\mathrm{k}_i+1}^{\mathrm{k}_{i+1}-1} B^{j}(t_i,t_{i+1}] & \text{if}~ \mathrm{k}_{i+1}>j>\mathrm{k}_{i}\\~\vspace{-0.35cm}\\
\quad B^{j}(t_i,t_{i+1}] & \text{if}~ \mathrm{k}_{i}=\mathrm{k}_{i+1}=j\\~\vspace{-0.35cm}\\
\bigcup\limits_{j=\mathrm{k}_{i+1}+1}^{\mathrm{k}_i-1}  B^{j}(t_i,t_{i+1}]  & \text{if}~ \mathrm{k}_i>j>\mathrm{k}_{i+1}\\~\vspace{-0.35cm}\\
~\qquad ~ \emptyset & \text{otherwise}.
\end{cases}.$$
Note that in each interval $(t_i,t_{i+1}]$, $B(t_i,t_{i+1}]$ are Poisson process independent but with different intensities depending on $k_i$ and $k_{i+1}.$ 

Note also that given our blocking model $B(\cdot,\cdot]$ includes all vehicles that may block connectivity of V2V+V2I capable cars in $\mathcal{M}$. Figure~\ref{fig:example_coupling} shows examples of configurations and their associated $B(.,.]$. Finally, for each $i\in\mathrm{N}$ we define 

\begin{figure*}[h]
\centering
\includegraphics[width=1.6\columnwidth]{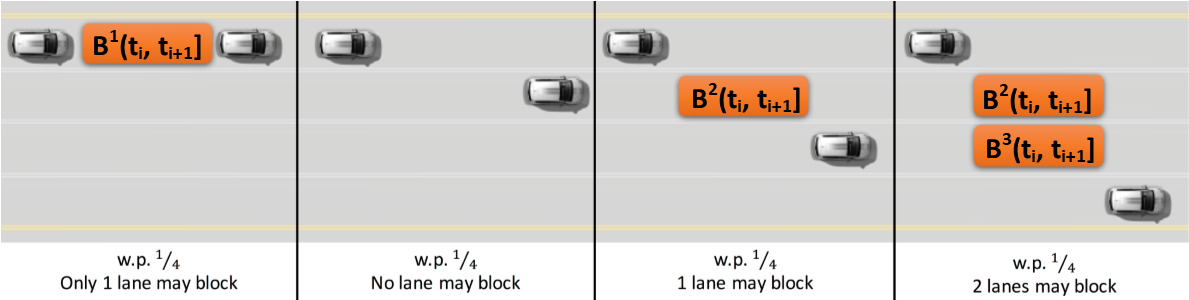}
\caption{Examples of configurations and their associated $B^i(.,.]$ sets, for $\eta = 4$ and $k_i \leq k_{i+1}$.}
\label{fig:example_coupling}
\end{figure*}

$$B^{\mathcal{S}}(t_i,t_{i+1}]=B(t_i,t_{i+1}]\cup A(t_i,t_{i+1}]$$
where $A(t_i,t_{i+1}]$ 
is an independent PPP on $(t_i,t_{i+1}]$ with intensity needed to ensure that the overall intensity is equalized in all intervals; ensuring that $B^{\mathcal{S}}(t_i,t_{i+1}]$ is a PPP with intensity $\lambda^b_{\text{eff}}.$ 
 We shall introduce $B^{\mathcal{S}}(t_i,t_{i+1}]_{i\in\mathbb{N}}$ in each of the intervals in the process $\xi^{\mathcal{S}}.$

At this point, it is worth noting that given our construction, 
\begin{equation}
\text{LoS interrupted in } \xi^{\mathcal{M}} ~ ~ \begin{aligned}
  {\implies} \\
  {\centernot\impliedby} 
\end{aligned} ~ ~ \text{LoS interrupted in } \xi^{\mathcal{S}}.
\label{eq:cond}
\end{equation}
and the distributions of $\xi^{\mathcal{M}}\sim{\mathcal{M}}=\mathcal{H}(\eta, \boldsymbol{{\lambda^{\text{V2V}}}}, ~\boldsymbol{{\lambda^{b}}})$ and $\xi^{\mathcal{S}}\sim{\mathcal{S}}=\mathcal{H}(1, ~\gamma\lambda,\lambda^b_{\text{eff}})$
where $${{\lambda}}=\lambda^{\text{V2V}}+\lambda^{b},\  \gamma=\frac{\lambda^{\text{V2V}}}{\lambda} \  
\text{and} \ 
\lambda^b_{\text{eff}} = \max(\lambda^{b}_0, \lambda^{b}_k, \sum^{\eta-1}_{i=2}\lambda^{b}_{i}).$$
This implies the following fact.
\begin{fact}\label{fact:icx} Based on the aforementioned coupling one can show that
$N_v^\mathcal{M}\ge^{st}N_v^\mathcal{S},\quad L_v^\mathcal{M}\ge^{st}L_v^\mathcal{S}\quad \text{and}\quad M_v^\mathcal{M}\ge^{st}M_v^\mathcal{S},$
and,
$\pi_v^\mathcal{M}\ge \pi_v^\mathcal{S},\quad R_v^\mathcal{M}\le^{icx}R_v^\mathcal{S}.$\end{fact}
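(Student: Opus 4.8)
The plan is to leverage the coupling constructed in the preceding paragraphs of the proof, which has already done most of the heavy lifting: it produces a single realization of vehicle locations on $\xi^{\mathcal{M}}$ and $\xi^{\mathcal{S}}$ on a common probability space, with $\xi^{\mathcal{S}}$ containing exactly the same V2V+V2I capable vehicle locations $(T_i)_{i\in\mathbb{N}}$ as $\xi^{\mathcal{M}}$, and with $\xi^{\mathcal{S}}$ containing a \emph{superset} of the potentially-blocking vehicles, namely $B^{\mathcal{S}}(t_i,t_{i+1}] = B(t_i,t_{i+1}] \cup A(t_i,t_{i+1}]$. The key structural consequence already recorded is the one-directional implication in \eqref{eq:cond}: any LoS link broken in $\xi^{\mathcal{M}}$ is also broken in $\xi^{\mathcal{S}}$, but not conversely. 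I would first make explicit that this means, on the common probability space, the connectivity graph on the V2V+V2I vehicles in $\xi^{\mathcal{S}}$ is a \emph{subgraph} of the one in $\xi^{\mathcal{M}}$ (same vertex set, fewer edges), since Assumption~\ref{assum:1} makes an edge present iff the two vehicles are within distance $d$ and the LoS is unobstructed, and the distance criterion is identical in the two processes.

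From this subgraph relation the cluster comparisons follow almost immediately. Fix the typical vehicle at the origin (invoking Slivnyak, as in the proof of Lemma~\ref{lem:coverage}). Its cluster in $\xi^{\mathcal{M}}$ is the connected component containing the origin in the richer graph, which contains the connected component in the sparser graph of $\xi^{\mathcal{S}}$. Hence $N_v^{\mathcal{M}} \ge N_v^{\mathcal{S}}$ and $L_v^{\mathcal{M}} \ge L_v^{\mathcal{S}}$ \emph{pathwise} on the coupled space, and since the RSU positions are identical (they are deterministic and independent of the vehicle processes), a longer cluster covers at least as many RSUs, giving $M_v^{\mathcal{M}} \ge M_v^{\mathcal{S}}$ pathwise as well. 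Pathwise domination on a coupling is exactly the standard characterization of $\ge^{st}$ (cf.\ the definition in the appendix and \cite{MuS02}), so $N_v^{\mathcal{M}} \ge^{st} N_v^{\mathcal{S}}$, $L_v^{\mathcal{M}} \ge^{st} L_v^{\mathcal{S}}$, $M_v^{\mathcal{M}} \ge^{st} M_v^{\mathcal{S}}$. For coverage, the origin vehicle is covered iff its cluster reaches at least one RSU; since the $\mathcal{M}$-cluster contains the $\mathcal{S}$-cluster, the coverage event for $\mathcal{S}$ implies that for $\mathcal{M}$ on the coupled space, whence $\pi_v^{\mathcal{M}} \ge \pi_v^{\mathcal{S}}$.

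The rate comparison $R_v^{\mathcal{M}} \le^{\mathrm{icx}} R_v^{\mathcal{S}}$ is the delicate part and I would handle it by the same majorization argument used in the proof of Theorem~\ref{thm:cdf_rate}. On the coupled realization, the total number of V2V+V2I vehicles and the set of RSUs that are busy is governed only by the vehicle and RSU locations, which are shared; but the $\mathcal{S}$-clustering is a coarsening — each $\mathcal{M}$-cluster is a union of one or more $\mathcal{S}$-clusters — so the multihomed capacity pool available under $\mathcal{M}$ to a set of vehicles is (weakly) larger, and the max-min fair allocation under $\mathcal{M}$ is a feasible allocation for the constraints faced under... wait — here the direction is the reverse of Theorem~\ref{thm:cdf_rate}: merging clusters can only help, so the max-min fair rate vector under $\mathcal{S}$ is feasible under $\mathcal{M}$, and by the lexicographic-minimality / majorization characterization of max-min fairness \cite{RaB07,HLP29} the $\mathcal{M}$ allocation is majorized by the $\mathcal{S}$ allocation, which gives $R_v^{\mathcal{M}} \le^{\mathrm{icx}} R_v^{\mathcal{S}}$ for the typical vehicle after the usual $N/\mathbb{E}[N]$ size-biasing to pass from the typical-cluster view to the typical-vehicle view. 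The main obstacle I anticipate is making this last step rigorous: one must argue that merging clusters genuinely enlarges the feasible rate region (the per-RSU capacity constraints are unchanged, and the only change is which vehicles may draw from which RSU, which is a relaxation), and that size-biasing preserves $\le^{\mathrm{icx}}$ — both are true but need care, and the argument should note that $\lambda^b_{\mathrm{eff}}$ was chosen precisely so that $B^{\mathcal{S}}$ is a genuine stationary PPP of that intensity, so that $\xi^{\mathcal{S}}$ really has the claimed distribution $\mathcal{H}(1,\gamma\lambda,\lambda^b_{\mathrm{eff}})$ and the single-lane results of Sections~\ref{sec:cluster}--\ref{sec:perf_analysis} apply to it verbatim.
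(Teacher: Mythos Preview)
Your proposal is correct and uses the same key ideas as the paper: the coupling-induced subgraph relation (each $\mathcal{M}$-cluster splits into one or more $\mathcal{S}$-clusters), and the max-min/majorization argument from \cite{RaB07} for the rate comparison. The difference is only in how you extract the \emph{typical-vehicle} stochastic dominance from the coupling. The paper writes $P(N_v^{\mathcal{M}}>n)$ and $P(N_v^{\mathcal{S}}>n)$ as ergodic averages over clusters (with the size weighting $N_i/\sum N_i$ making the passage from typical cluster to typical vehicle explicit) and compares term by term using $\mathbf{1}(N_i^{\mathcal{M}}>n)\ge \mathbf{1}(N_{i,j}^{\mathcal{S}}>n)$ and $N_i^{\mathcal{M}}=\sum_j N_{i,j}^{\mathcal{S}}$. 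You instead invoke Slivnyak, place the typical vehicle at the origin in both coupled processes, and read off $N_v^{\mathcal{M}}\ge N_v^{\mathcal{S}}$, $L_v^{\mathcal{M}}\ge L_v^{\mathcal{S}}$, $M_v^{\mathcal{M}}\ge M_v^{\mathcal{S}}$ pathwise. Your route is a bit more direct and sidesteps the explicit size-biasing bookkeeping; the paper's route has the advantage of making the link to the cluster-level quantities transparent, which matters because the single-lane formulas in Lemma~\ref{lem:mdi} are stated for the typical \emph{cluster}. For $R_v^{\mathcal{M}}\le^{\mathrm{icx}}R_v^{\mathcal{S}}$ the two arguments coincide: the $\mathcal{S}$ max-min allocation is feasible for $\mathcal{M}$ (merging clusters only relaxes the constraints), total delivered rate is identical since the busy-RSU set depends only on the shared V2V vehicle locations, and majorization plus equal means yields the convex (hence increasing-convex) order. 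One small slip: you call the $\mathcal{S}$-clustering a ``coarsening'' but then correctly describe it as a refinement (each $\mathcal{M}$-cluster is a union of $\mathcal{S}$-clusters); the substance is right.
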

\begin{proof}
Note that by ergodicity of the cluster process, $P(N_v^\mathcal{M}>n)$ and $P(N_v^\mathcal{S}>n)$ correspond to:
$$P(N_v^\mathcal{M}>n)=\lim_{c\rightarrow\infty} \frac{1}{\sum\limits_{i=1}^{c} N_{i}^\mathcal{M} }\sum\limits_{i=1}^{c} N_{i}^\mathcal{M} \cdot \mathrm{\mathbf{1}}(N_{i}^\mathcal{M}>n)$$
$$P(N_v^\mathcal{S}>n)=\lim_{c\rightarrow\infty} \frac{1}{\sum\limits_{i=1}^{c} N_{i}^\mathcal{M} }\sum\limits_{i=1}^{c} \sum\limits_{j=1}^{Y_i} N_{i,j}^\mathcal{S} \cdot\mathrm{\mathbf{1}}(N_{i,j}^\mathcal{S}>n),$$
where $N_{i}^{\mathcal{M}}$ is the number of vehicles in the $i^{\text{th}}$ cluster in the multilane and $Y_i$ is the number of subclusters in the single lane originated from the $i^{\text{th}}$ cluster in the multilane. $N^{\mathcal{S}}_{i,j}$ denotes the number of vehicles in the $j$th subcluster in the single lane process. 

By noting that the clusters in $\mathcal{S}$ are created by splitting the clusters of $\mathcal{M}$, we can see that
$$\mathrm{\mathbf{1}}(N_{i}^\mathcal{M}>n)\ge \mathrm{\mathbf{1}}(N_{i,j}^\mathcal{S}>n),\quad \forall i,j.$$
and given the fact that $N_{i}^\mathcal{M}=\sum\limits_{j=1}^{Y_i} N_{i,j}^\mathcal{S},$ we have that
\begin{align*}
P(N_v^\mathcal{M}>n)&=\lim_{c\rightarrow\infty} \frac{1}{\sum\limits_{i=1}^{c} N_{i}^\mathcal{M} }\sum\limits_{i=1}^{c} \sum\limits_{j=1}^{Y_i} N_{i,j}^\mathcal{S} \cdot \mathrm{\mathbf{1}}(N_{i}^\mathcal{M}>n)\\
&\ge \lim_{c\rightarrow\infty} \frac{1}{\sum\limits_{i=1}^{c} N_i^\mathcal{M} }\sum\limits_{i=1}^{c} \sum\limits_{j=1}^{Y_i} N_{i,j}^\mathcal{S} \cdot \mathrm{\mathbf{1}}(N_{i,j}^\mathcal{S}>n) \\
&=P(N_v^\mathcal{S}>n)
\end{align*}
and therefore $N_v^\mathcal{M}\ge^{st}N_v^\mathcal{S}.$
Similarly, $$\quad L_v^\mathcal{M}\ge^{st}L_v^\mathcal{S}\quad \text{and}\quad M_v^\mathcal{M}\ge^{st}M_v^\mathcal{S}$$
by noting that
$
\mathrm{\mathbf{1}}(L_{v,i}^\mathcal{M}>l)		\ge\mathrm{\mathbf{1}}(L_{v,i,j}^\mathcal{S}>l) $
and
$
 \mathrm{\mathbf{1}}(M_{v,i}^\mathcal{M}>m)	\ge\mathrm{\mathbf{1}}(M_{v,i,j}^\mathcal{S}>m)
$
are direct implications of Eq. \eqref{eq:cond}.
Additionally it also has the implication that, within a cluster, if we denote as $\pi_{v,i}$ the probability that a typical vehicle in cluster $i$th is connected then
$\pi_{v,i}^\mathcal{M} \ge \pi_{v,i,j}^\mathcal{S}.$

%
Noting that $N_{i}^\mathcal{M}=\sum\limits_{j=1}^{Y_i} N_{i,j}^\mathcal{S}$ and observing that the expected shared rate per vehicle is equal in both systems we can directly infer the $R_v^\mathcal{M}\le^{icx}R_v^\mathcal{S}$. It is proven in \cite{RaB07} that a max-min fairness allocation achieves the lexicographically minimum vector, i.e., for a max-min share rate allocation $\hat{R}$ and any other shared rate allocation ${R}$ then $\hat{R}$ \textit{is majorized} by ${R}$ \cite{HLP29} and further implies $\hat{R}\le^{icx}{R}.$ 
The max-min shared rate allocation of the single lane system $R^{\mathcal{S}}$ is always a feasible rate allocation in the multilane system; since the single lane system has the same number of vehicles and the same mean rate, but the ability for the vehicles to reach the RSUs is reduced and we have that $R_v^\mathcal{M}\le^{icx}R_v^\mathcal{S}.$

%
\end{proof}

\vspace{-0.4cm}

\end{document}